\newtheorem{thm}{Theorem}[section]
\newtheorem{lem}[thm]{Lemma}
\newtheorem{cor}[thm]{Corollary}
\newtheorem{exam}{Example}
\font\msbm=msbm10 at 12pt
\newcommand{\ZZ}{\mbox{\msbm Z}}
\newcommand{\FF}{\mbox{\msbm F}}
\newcommand{\G}{\alpha}
\def\vv{\mathbf{v}}
\def\vw{\mathbf{w}}
\def\vc{\mathbf{c}}
\def\1v{\mathbf{1}}
\def\0v{\mathbf{0}}
\begin{document}
\title{ Constructions of Self-Dual and Formally Self-Dual Codes from Group Rings}
\author{
Steven T. Dougherty \\
Department of Mathematics \\
University of Scranton \\
Scranton, PA 18510 \\
USA \\
Joseph Gildea \\
Rhian Taylor \\ 
University of Chester \\
Chester, UK \\
Alexander Tylshchak \\
Department of Algebra \\ 
Uzhgorod State University \\
Ukraine
}

\maketitle

\begin{abstract}
We give constructions of self-dual and formally self-dual codes from group rings where the ring is a finite commutative Frobenius ring.
We improve the existing construction given in \cite{Hurley1} by showing that one of the conditions given in the theorem is unnecessary and moreover it restricts the number of self-dual codes obtained by the construction.  We show that several of the standard constructions of self-dual codes are found within our general framework.  We prove that our constructed codes correspond to ideals in the group ring $RG$ and as such must have an automorphism group that contains $G$ as a subgroup.  We also prove that a common construction technique for producing self-dual codes cannot produce the putative $[72,36,16]$ Type~II code. Additionally, we show precisely which groups can be used to construct the extremal Type II codes over length 24 and 48. 
\end{abstract} 

{\bf Key Words}:  Group rings; self-dual codes; codes over rings. 

\section{Introduction}

Self-dual codes over fields and rings are one of the most important and widely studied families of codes.  They have interesting connections to groups, designs, lattices and other objects as well.  As such, constructions of interesting self-dual codes are an important area of study in coding theory.
In \cite{Hurley1}, Hurley gave a construction of self-dual codes from elements in a group algebra.  The constructions were done generally in the group algebra $\FF_2D_{2k},$ where $D_{2k}$ is the dihedral group of order $2k.$    In \cite{McLoughlin}, McLoughlin gave a construction of the extremal $[48,24,12]$  using this construction technique.    
In this paper, we expand this construction to codes over finite commutative Frobenius rings and show how to construct isodual and  formally self-dual codes as well.  Additionally, we construct self-dual and formally self-dual codes over various families of rings, which, in turn,  give formally self-dual and self-dual binary codes via a Gray map.  We consider additional groups as well and expand the constructions using these groups.
 
\subsection{Codes}  
 
Let $R$ be a finite ring.  We assume that all rings contain a multiplicative identity.   
Let $\widehat{R}$ be the character module of $R$.  Then for a finite ring $R$ the following are equivalent.
\begin{itemize}
\item $R$ is a Frobenius ring.
\item As a left module, $\widehat{R} \cong  { }_RR.$
\item As a right module, $\widehat{R} \cong R_R.$ 
\end{itemize}
For commutative rings we can say that the $R$-module $R$ is injective and that if $R$ is a finite local ring with maximal ideal $\mathfrak m$ and residue field $\mathbf k$, then a Frobenius ring has 
$\mathrm{dim}_{\mathbf k} \mathrm{Ann}(\mathfrak m)=1$.
Throughout this paper, we shall always assume that the rings are commutative.

 A code over $R$ of length $n$ is a subset of $R^n$.  If the code is a submodule of the ring then we say that the code is a linear code.  We attach to the ambient space the usual inner-product, namely $[\vv,\vw]= \sum v_i w_i$ and define the orthogonal with respect  to this inner-product as $C^\perp = \{ \vv  \in R^n\ | \ [\vv,\vw]=0,\ \forall \vw \in C \}.$  There is a unique orthogonal because the ring is commutative.    A code is said to be self-orthogonal if $C \subseteq C^\perp$ and  self-dual if $C=C^\perp$.  We say that two codes $C$ and $C'$ are equivalent if $C'$ can be formed from $C$ by permuting the coordinates of $C$.  Note that we are not allowing for multiplication of a coordinate by a unit in our definition of equivalent.  A code $C$ is said to be isodual if $C$ and $C^\perp$ are equivalent codes. The automorphism group of a code $C$, denoted $Aut(G)$, consists of all permutations of the coordinates of the code that fix the code.
 
 Let $C$ be a code  over a ring $R= \{a_0,a_1,\dots,a_{r-1}\}$.  The complete weight enumerator for the code $C$  is defined as:
\begin{equation} \label{equation:cwe} 
cwe_C(x_{a_0},x_{a_1}, \dots, x_{a_{r-1}}) = \sum_{\vc \in C} \prod_{i=0}^{r-1} x_{a_i}^{n_i(\vc)}
\end{equation}
where there are $n_i(\vc)$ occurrences of $a_i$ in the vector $\vc$. 

The Hamming weight of a vector $\vv \in R^n$ is $wt_H(\vv) = |\{ i \ | \ v_i \neq 0 \} |.$  
The  Hamming weight enumerator  is given by
\begin{equation} W_C(x,y) = \sum_{\vc \in C}  x^{n-wt_H(\vc) } y^{wt_H(\vc)} = cwe_C(x,y,y,\dots,y). \end{equation}

Throughout this work we restrict ourselves to Frobenius rings since this is the class of rings for which MacWilliams relations exist.  That is, the weight enumerator of a code over a Frobenius ring uniquely determines the weight enumerator of its orthogonal.  The MacWilliams relations imply that for a code $C$ over a Frobenius ring $R$ we have $|C||C^\perp| = |R|^n.$  This often fails for codes over non-Frobenius rings.  In that sense, it is very difficult to discuss self-dual and formally self-dual codes over non-Frobenius rings.

A Gray map is a distance preserving map $\phi$ from $R$ to $\FF_2^t$ for some $t$.  We define the Lee weight, $wt_L(a)$ of an element $a \in R$ as the Hamming weight of $\phi(a).$ We then extend this to $R^n$ by saying that the Lee weight of a vector is the sum of the Lee weights of the coordinates of the vector.   Then the Lee weight enumerator of a code $C$ over $R$ with an associated Gray map is defined as:
\begin{equation}
L_C(x,y) = \sum_{\vc \in C}  x^{N-wt_L(\vc) } y^{wt_L(\vc)},
\end{equation}
where $N$ is the length of the binary image of the code $C$ under the Gray map.
Note that the Lee weight enumerator of a code $C$ is the Hamming weight enumerator of the code $\phi(C).$

We say that a code $C$ is formally self-dual with respect to a weight enumerator if the weight enumerators of  $C$ and $C^\perp$ are identical.
Note that a self-dual code is necessarily formally self-dual with respect to all weight enumerators but a code can be formally self-dual and not self-dual.  Moreover, a code can be formally self-dual with respect to one weight enumerator and not another.

\subsection{Group Rings}

Let $G$ be a finite group or order $n$, then the group ring $RG$ consists of $\sum_{i=1}^n \alpha_i g_i$, $\alpha_i \in R$, $g_i \in G.$  
Addition in the group ring is done by coordinate addition, namely $\sum_{i=1}^n \alpha_i g_i + \sum_{i=1}^n \beta_i g_i = 
\sum_{i=1}^n (\alpha_i + \beta_i ) g_i.$ The product is given by 
$(\sum_{i=1}^n \alpha_i g_i)( \sum_{j=1}^n \beta_j g_j)  = \sum_{i,j} \alpha_i \beta_j g_i g_j .$ This gives that the coefficient of $g_i$ in the product is $ \sum_{g_j g_k = g_i } \alpha_i \beta_j .$  

Group rings are defined for groups and rings of arbitrary cardinality but, in this paper, we shall only be concerned with finite rings and finite groups.  If $R$ is a field then the term group algebra is usually used in this case since  the structure is  an algebra as well.
Throughout this paper we use $e_G$ to refer to the identity element of any group $G$.

We denote the space of $n$ by $n$ matrices with coefficients in $R$ by $M_n(R).$  Note that $M_n(R)$ is, in general, a non-commutative ring. 

A matrix $M$, where the indices are given by the elements in $\ZZ_n$, is said to be circulant if $M_{i,j} = M_{1,j-i \pmod{n} }$, that is the matrix is formed by cycling the first row to the right.  
A matrix $M$, where the indices are given by the elements in $\ZZ_n$, is said to be reverse circulant if 
$M_{i,j} = M_{1,j+i \pmod{n} }$, that is the matrix is formed by cycling the first row to the left.   It is immediate from the definition that a reverse circulant matrix is symmetric, that is $M=M^T.$  

\subsection{Family of Rings}

In this section, we shall describe a family of rings which is useful in producing binary formally self-dual codes via their associated Gray maps.

Define the ring  $R_k$ as \begin{equation}
R_k=\FF_2[u_1,u_2,\ldots, u_k]/\langle {u_i}^2, u_iu_j - u_ju_i \rangle.
\end{equation}

 These rings are  local rings of  characteristic 2 with 
    maximal ideal $\frak{m}=\langle   u_1, u_2, \ldots, u_k \rangle.$
    The socle for the ring $R_k$ is 
    $Soc(R_k)=\langle u_1u_2 \cdots u_k \rangle = \frak{m}^\perp.$  We have that $|R_k| = 2^{2^k}.$
    The rings $R_k$ were described in  \cite{Dougherty5}, 
    \cite{Dougherty6}, and \cite{Dougherty8}.

We can describe a Gray map for $R_k$.  We define
$\phi_1(a+bu_1) = (b,a+b),$ where $\phi$ maps $R$ to $\FF_2^2.$   Then view $R[u_1,u_2,\dots,u_s]$ as $R[u_1,u_2,\dots,u_{s-1}] [u_s]$ and define
$\phi_s(a+bu_s) = (b,a+b).$  Then the map $\phi_k$ is  map from $R_k$ to $\FF_2^{2^k}$.

The following theorem appears in \cite{Dougherty8}.

\begin{thm} \label{theorem:images}
Let $C$ be a self-dual  code over $R_k$ then $\phi_k(C)$ is a self-dual code in $\FF_2^{2^k}.$
\end{thm}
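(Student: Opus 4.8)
The plan is to argue by induction on $k$, reducing the statement to two facts: that $\phi_k$ preserves orthogonality, and that it sends a code of the correct size to a binary code of the correct size. Since $\phi_1(a+bu_1)=(b,a+b)$ is $\FF_2$-linear and a bijection of $R_1$ onto $\FF_2^2$, the recursively defined map $\phi_k$ is $\FF_2$-linear and a bijection from $R_k$ onto $\FF_2^{2^k}$; consequently $\phi_k(C)$ is a linear binary code with $|\phi_k(C)|=|C|$. Because $C$ is self-dual over the Frobenius ring $R_k$, the relation $|C|\,|C^\perp|=|R_k|^n$ gives $|C|=|R_k|^{n/2}=2^{2^{k-1}n}$, and since the binary image has length $N=2^k n$ we obtain $|\phi_k(C)|=2^{N/2}$. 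Thus it suffices to prove that $\phi_k(C)$ is self-orthogonal, for a self-orthogonal binary code of dimension $N/2$ is automatically self-dual.

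The heart of the argument is a single identity relating the two inner products. Write each $r\in R_k$ uniquely as $r=\sum_{S\subseteq\{1,\dots,k\}} r_S\,u_S$ with $u_S=\prod_{i\in S}u_i$ and $r_S\in\FF_2$, and let $\lambda_k(r)=\sum_S r_S$ be the $\FF_2$-linear functional summing all coefficients. The claim is that for all $v,w\in R_k$,
\begin{equation}
[\phi_k(v),\phi_k(w)] = \lambda_k(vw).
\end{equation}
For $k=1$ this is a direct check: with $v=a+bu_1$, $w=c+du_1$ one has $vw=ac+(ad+bc)u_1$, while $[\phi_1(v),\phi_1(w)]=bd+(a+b)(c+d)=ac+ad+bc$ over $\FF_2$, which is exactly $\lambda_1(vw)$. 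For the inductive step I write $R_k=R_{k-1}[u_k]/\id{u_k^2}$, so that $v=a+bu_k$, $w=c+du_k$ with $a,b,c,d\in R_{k-1}$, and the Gray map unfolds as $\phi_k(v)=(\phi_{k-1}(b),\phi_{k-1}(a+b))$, with $vw=ac+(ad+bc)u_k$. Applying the induction hypothesis to each half and using $\FF_2$-linearity of $\lambda_{k-1}$,
\begin{align}
[\phi_k(v),\phi_k(w)] &= \lambda_{k-1}(bd)+\lambda_{k-1}\bigl((a+b)(c+d)\bigr) \\
&= \lambda_{k-1}(ac+ad+bc),
\end{align}
the $bd$ terms cancelling in characteristic $2$. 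Comparing the coefficients of $u_k^0$ and $u_k^1$ in $vw$ shows this last expression equals $\lambda_k(vw)$, completing the induction.

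Finally I extend the identity to vectors: for $\vv,\vw\in R_k^n$, summing over coordinates and using linearity of $\lambda_k$ gives $[\phi_k(\vv),\phi_k(\vw)]=\lambda_k([\vv,\vw])$. If $\vv,\vw\in C$ then $[\vv,\vw]=0$ since $C$ is self-orthogonal, hence $[\phi_k(\vv),\phi_k(\vw)]=\lambda_k(0)=0$, so $\phi_k(C)$ is self-orthogonal; combined with the dimension count this yields $\phi_k(C)=\phi_k(C)^\perp$. I expect the main obstacle to be the bookkeeping in the inductive step, namely matching $\lambda_{k-1}$ applied to the two $R_{k-1}$-components against $\lambda_k$ applied to the full product $vw$. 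This is precisely where the particular form $\phi_s(a+bu_s)=(b,a+b)$ of the Gray map and the nilpotency $u_s^2=0$ are both essential, and verifying it carefully is the one place the argument must not be taken for granted.
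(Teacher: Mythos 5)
Your proof is correct, but note that the paper itself contains no proof of this theorem to compare against: it is imported as a known result from \cite{Dougherty8} (``The following theorem appears in \cite{Dougherty8}''), so your argument is necessarily a different, self-contained route. Its two pillars both hold up. The counting half is exactly right: $\phi_k$ is $\FF_2$-linear and bijective on each coordinate, so $\phi_k(C)$ is linear with $|\phi_k(C)|=|C|$, and the Frobenius relation $|C||C^\perp|=|R_k|^n$ with $C=C^\perp$ gives $|C|=|R_k|^{n/2}=2^{2^{k-1}n}=2^{N/2}$, valid even for odd $n$ since $2^{k-1}n$ is an integer. The orthogonality half rests on your identity $[\phi_k(v),\phi_k(w)]=\lambda_k(vw)$, and your induction is sound: the base case is a direct check, and in the inductive step the $bd$ terms cancel in characteristic $2$ while $\lambda_k(vw)=\lambda_{k-1}(ac)+\lambda_{k-1}(ad+bc)$ splits precisely along the coefficients of $u_k^0$ and $u_k^1$, matching $\lambda_{k-1}(ac+ad+bc)$. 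Extending to vectors by additivity of $\lambda_k$ and closing with the standard fact that a self-orthogonal binary code of dimension $N/2$ is self-dual completes a valid proof. What your packaging buys is a single crisp lemma showing that the Gray map intertwines the binary inner product with the ring inner product composed with the coefficient-sum functional; this isolates exactly why the specific form $\phi_s(a+bu_s)=(b,a+b)$ and the nilpotency $u_s^2=0$ matter, and it would let a reader verify the theorem without consulting \cite{Dougherty8} at all.
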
 

We shall give several examples where we construct self-dual codes over $R_k$ using the method in the paper and then use the Gray map to construct a binary self-dual code of longer length.

\section{Matrix Construction}

In this section, we shall give a construction of codes in $R^n$ from the group ring $RG.$  This construction was first given for codes over fields by Hurley in \cite{Hurley1}.
Let $R$ be a finite commutative Frobenius ring and let $G = \{ g_1,g_2,\dots,g_n \}$ be a group of order $n$.  Let $v \in RG.$  Define the matrix $\sigma(v) \in M_n(R)$ to be
\begin{equation} \label{equation:construction}
\sigma(v) = 
\left( 
\begin{array}{ccccc}
\alpha_{g_1^{-1} g_1} & \alpha_{g_1^{-1} g_2} & \alpha_{g_1^{-1} g_3} & \dots & \alpha_{g_1^{-1} g_n}  \\
\alpha_{g_2^{-1} g_1} & \alpha_{g_2^{-1} g_2} & \alpha_{g_2^{-1} g_3} & \dots & \alpha_{g_2^{-1} g_n}  \\
\vdots  & \vdots & \vdots & \vdots & \vdots \\ 
\alpha_{g_n^{-1} g_1} & \alpha_{g_n^{-1} g_2} & \alpha_{g_n^{-1} g_3} & \dots & \alpha_{g_n^{-1} g_n}  \end{array}
\right).
\end{equation}

The elements $g_1^{-1}, g_2^{-1}, \dots, g_n^{-1}$ are simply the elements of the group $G$ in some order.  We  take this as the ordering of the elements since it makes the constructions more natural.

For a given element $v \in RG$, we define the following code over the ring $R$:
\begin{equation}
C(v) = \langle \sigma (v) \rangle,
\end{equation}
that is the code formed by taking the row space of $\sigma(v)$ over the ring $R$.
The code $C(v)$ is a linear code since it is the row space of a generator matrix, but it is not possible to determine the size of the code (or the dimension if $R$ is a field) immediately from the matrix.  In other words, the rows of the matrix $\sigma(v)$ are not necessarily linearly independent, although they may be, as we show in the following example.

\begin{exam}
Let $R$ be a finite commutative Frobenius ring and let $G = \{g_1,g_2,\dots,g_n \}$ be a group.  Let $v_1 =  \sum 0g_i$.  Then $\sigma(v_1) $ is the all zero matrix and $C(v_1) = \{ {\bf 0} \}.$  Let $v_2 = \sum \alpha_i g_i$ with $\alpha_j =1$ for some $j$ and $\alpha_i = 0$ for $i \neq j.$   Then $\sigma(v)$ is permutation equivalent to $I_n$, the $n$ by $n$ identity matrix, which gives that $C(v_2) = R^n.$ 
\end{exam}

\begin{exam}  Let $v=(1+s+s^2+s^3)(1+t) \in \mathbb{F}_2M_{16}$ where $M_{16}=\langle s,t\,|\,s^8=t^2=1,\,st=ts^5 \rangle$
is the modular group of order $16$. Then,
\[\sigma(v)=
\left(\begin{smallmatrix}
1&1&1&1&0&0&0&0&1&1&1&1&0&0&0&0\\
0&1&1&1&1&0&0&0&1&0&0&0&0&1&1&1\\
0&0&1&1&1&1&0&0&0&0&1&1&1&1&0&0\\
0&0&0&1&1&1&1&0&1&1&1&0&0&0&0&1\\
0&0&0&0&1&1&1&1&0&0&0&0&1&1&1&1\\
1&0&0&0&0&1&1&1&0&1&1&1&1&0&0&0\\
1&1&0&0&0&0&1&1&1&1&0&0&0&0&1&1\\
1&1&1&0&0&0&0&1&0&0&0&1&1&1&1&0\\
1&1&1&1&0&0&0&0&1&1&1&1&0&0&0&0\\
1&0&0&0&0&1&1&1&0&1&1&1&1&0&0&0\\
0&0&1&1&1&1&0&0&0&0&1&1&1&1&0&0\\
1&1&1&0&0&0&0&1&0&0&0&1&1&1&1&0\\
0&0&0&0&1&1&1&1&0&0&0&0&1&1&1&1\\
0&1&1&1&1&0&0&0&1&0&0&0&0&1&1&1\\
1&1&0&0&0&0&1&1&1&1&0&0&0&0&1&1\\
0&0&0&1&1&1&1&0&1&1&1&0&0&0&0&1
\end{smallmatrix}\right)
\]
\noindent and $\sigma(v)$ is equivalent to
\[
\left(\begin{smallmatrix}
1&0&0&0&0&1&1&1&0&1&1&1&1&0&0&0\\
0&1&0&0&0&1&0&0&1&0&1&1&1&0&1&1\\
0&0&1&0&0&0&1&0&1&1&0&1&1&1&0&1\\
0&0&0&1&0&0&0&1&1&1&1&0&1&1&1&0\\
0&0&0&0&1&1&1&1&0&0&0&0&1&1&1&1
\end{smallmatrix}\right).
\]
Clearly, $C(v)$ is the $[16,5,8]$ Reed-Muller code.
\end{exam} 

We shall now show that the codes we construct are actually ideals in the group ring.  We use this to get information about the automorphism group of the constructed code.

\begin{thm}
Let $R$ be a finite commutative Frobenius ring and $G$ a finite group of order $n$.
Let $v \in RG$ and $C(v) $ the corresponding code in $R^n$.  Let $I(v)$ be the set of elements of $RG$ such that $\sum \alpha_i g_i \in I(V)$ if and only if $(\alpha_1,\alpha_2,\dots, \alpha_n) \in C(v).$  Then $I(v)$ is a left ideal in $RG.$ 
\end{thm}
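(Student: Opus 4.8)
The plan is to identify the rows of $\sigma(v)$ with explicit elements of $RG$ and thereby recognize $I(v)$ as a principal left ideal, after which closure under left multiplication is automatic. First I would fix the natural $R$-module isomorphism $\psi\colon R^n \to RG$ sending $(r_1,\dots,r_n)$ to $\sum_{i=1}^n r_i g_i$. By the definition of $I(v)$ as the set of group ring elements whose coefficient vector lies in $C(v)$, we have $I(v) = \psi(C(v))$, so it suffices to understand $\psi(C(v))$ as a subset of $RG$.

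The key computation is to determine the image under $\psi$ of the $i$-th row of $\sigma(v)$. Writing $v = \sum_k \alpha_{g_k} g_k$, so that $\alpha_g$ denotes the coefficient of $g$ in $v$, I would compute
\[
g_i v = \sum_k \alpha_{g_k}\, g_i g_k = \sum_j \alpha_{g_i^{-1} g_j}\, g_j,
\]
where the second equality is the substitution $g_j = g_i g_k$, i.e. $g_k = g_i^{-1} g_j$. Comparing with the construction in \eqref{equation:construction}, the coefficient of $g_j$ in $g_i v$ is exactly the $(i,j)$ entry $\alpha_{g_i^{-1} g_j}$ of $\sigma(v)$. Hence the $i$-th row of $\sigma(v)$ is precisely the coefficient vector of $g_i v$, that is, $\psi(\mathrm{row}_i) = g_i v$.

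Since $\psi$ is an $R$-module isomorphism and $C(v)$ is by definition the $R$-linear span of the rows of $\sigma(v)$, it follows that $I(v) = \psi(C(v))$ is the $R$-linear span of $\{g_1 v,\dots,g_n v\}$. Finally, every $w \in RG$ has the form $w = \sum_i r_i g_i$ with $r_i \in R$, whence $wv = \sum_i r_i (g_i v)$; this identifies the $R$-linear span of the $g_i v$ with $RG\,v$, the \emph{principal left ideal} generated by $v$. As $RG\,v$ is manifestly closed under left multiplication, since $h(wv) = (hw)v \in RG\,v$ for all $h \in RG$, we conclude that $I(v)$ is a left ideal.

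The only delicate point is the coefficient bookkeeping in the reindexing step, which hinges on the paper's convention of indexing the rows of $\sigma(v)$ by the $g_i^{-1}$; this is exactly what makes the $i$-th row equal to $g_i v$ rather than, say, $v g_i$, and hence what yields a \emph{left} ideal. Once the rows are correctly identified with the $g_i v$, the ideal structure is immediate and there is no substantial obstacle.
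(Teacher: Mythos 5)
Your proof is correct and takes essentially the same route as the paper's: both identify the $i$-th row of $\sigma(v)$ with the group ring element $g_i v$, so that $I(v)$ is the $R$-span of $\{g_1 v,\dots,g_n v\}$, i.e.\ the principal left ideal $RG\,v$, whose closure under left multiplication is immediate. The only difference is that you verify the row identification explicitly via the reindexing $g_k = g_i^{-1}g_j$, a step the paper simply asserts.
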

\begin{proof}
The rows of $\sigma(v)$ consist  precisely of the  vectors that correspond to the elements $hv$ in $RG$ where $h$ is any element of $G$.  The sum of any two elements in $I(v)$ corresponds exactly to the sum of the corresponding elements in $C(v)$ and so $I(v)$ is closed under addition.  

Let $w_1= \sum \beta_i g_i \in RG.$  Then if $w_2$ corresponds to a vector in $C(v),$ it is of the form $\sum \gamma_j h_j v.$  Then
$w_1 w_2 = 
 \sum \beta_i g_i \sum \gamma_i h_i v = \sum \beta_i \gamma_j g_i h_j v$ which corresponds to an element in $C(v)$ and gives that the element is in $I(v).$  Therefore $I(V)$ is a left ideal of $RG.$ 
\end{proof}

\begin{exam}  Let $v=1+ba+ba^2+ba^3 \in \mathbb{F}_2D_8$ where $\langle a,b \rangle \cong D_8$. Then $\sigma(v) = \left(
\begin{smallmatrix}
1&0&0&0&0&1&1&1\\
0&1&0&0&1&1&1&0\\
0&0&1&0&1&1&0&1\\
0&0&0&1&1&0&1&1\\
0&1&1&1&1&0&0&0\\
1&1&1&0&0&1&0&0\\
1&1&0&1&0&0&1&0\\
1&0&1&1&0&0&0&1  
\end{smallmatrix} \right)$ and $\sigma(v)$ is equivalent to 
$A=\left(\begin{smallmatrix}
1&0&0&0&0&1&1&1\\
0&1&0&0&1&1&1&0\\
0&0&1&0&1&1&0&1\\
0&0&0&1&1&0&1&1
\end{smallmatrix} \right)$. Clearly $C(v)=\langle \sigma(v) \rangle$ is the $[8,4,4]$ extended Hamming code. Let $v_1=1+ba+ba^2+ba^3 \in \mathbb{F}_2D_8$,
$v_2=1+b+ba+ba^2 \in \mathbb{F}_2D_8$, $v_3=1+b+ba+ba^3 \in \mathbb{F}_2D_8$ and $v_4=1+b+ba^2+ba^3 \in \mathbb{F}_2D_8$ where $v_i$ are the group ring element
corresponding to the rows of $A$. Let $I(v)=\left\{ \sum_{i=1}^4 \alpha_i v_i | \alpha_i \in \mathbb{F}_2 \right\}$.  Then $I(v)$ is a left ideal of $\mathbb{F}_2D_8$ and
in particular $I(v)$ is the left principle ideal of $\mathbb{F}_2D_8$ generated by $v$.
\end{exam}

\begin{cor} \label{corollary:autogroup} 
Let $R$ be a finite commutative Frobenius ring and $G$ a finite group of order $n$.
Let $v \in RG$ and $C(v) $ the corresponding code in $R^n$. Then the automorphism group of $C(v)$ has a subgroup isomorphic to $G$.  
\end{cor}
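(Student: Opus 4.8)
The plan is to leverage the previous theorem, which under the coordinate bijection $(\alpha_1,\alpha_2,\dots,\alpha_n)\leftrightarrow\sum_i\alpha_i g_i$ identifies $C(v)$ with the left ideal $I(v)$ of $RG$. The central observation is that left multiplication in $RG$ by a single group element acts on coordinates purely as a permutation, and that a left ideal is stable under such multiplication; combining these two facts produces the desired copy of $G$ inside the automorphism group of $C(v)$.

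First I would fix $g\in G$ and analyze left multiplication by $g$. Since $g\cdot\sum_i\alpha_i g_i=\sum_i\alpha_i(gg_i)$ and left multiplication by $g$ permutes the elements of $G$ (this is the left regular representation), there is a permutation $\tau_g$ of $\{1,\dots,n\}$ with $gg_i=g_{\tau_g(i)}$. Reading off coefficients, the $j$-th coordinate of $gw$ is $\alpha_{\tau_g^{-1}(j)}$, so left multiplication by $g$ is realized on $R^n$ as a pure coordinate permutation — exactly the kind of map allowed in the paper's definition of equivalence and automorphism, where coordinates may be permuted but not multiplied by units.

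Second I would show this permutation fixes $C(v)$. Because $I(v)$ is a left ideal, $gI(v)\subseteq I(v)$; applying the same inclusion to $g^{-1}$ and using that left multiplication by $g$ is a bijection on $RG$ gives $gI(v)=I(v)$. Translating back through the coordinate bijection, $\tau_g$ carries $C(v)$ onto itself, so $\tau_g$ lies in the automorphism group of $C(v)$.

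Finally I would verify that $g\mapsto\tau_g$ is an injective group homomorphism into the automorphism group. Composition of left multiplications yields $\tau_g\tau_h=\tau_{gh}$, so the map is a homomorphism, and it is injective since $\tau_g=\mathrm{id}$ forces $gg_i=g_i$ for all $i$, whence $g=e_G$ by faithfulness of the regular representation. The image is therefore a subgroup of the automorphism group isomorphic to $G$. I do not anticipate a serious obstacle; the only point needing care is confirming that left multiplication induces a genuine coordinate permutation rather than a more general monomial transformation, which holds precisely because each $g_i$ is sent to the single basis element $gg_i$ with coefficient $1$, matching the paper's restriction to permutation equivalence.
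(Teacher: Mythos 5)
Your proof is correct and follows essentially the same route as the paper: the paper's (much terser) proof also argues that $I(v)$, being a left ideal, is invariant under the action of $G$, and concludes that $G$ embeds in the automorphism group. You have simply supplied the details the paper leaves implicit — that left multiplication acts as a coordinate permutation via the regular representation, that ideal-invariance gives $gI(v)=I(v)$, and that faithfulness of the regular representation makes $g\mapsto\tau_g$ injective.
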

\begin{proof}
Since $I(v)$ is an ideal in $RG$ we have that $I(V)$ is invariant by the action of the elements of $G$.  It follows immediately that the automorphism group of $C(v)$ contains $G$ as a subgroup.
\end{proof}

We note that our construction gives a natural generalization of cyclic codes since cyclic codes are ideals in $RC_n$ where $C_n$ is the cyclic group of order $n$.  Cyclic codes are held invariant by the cyclic shift whereas our codes are held invariant by the action of the group $G$ on the coordinates.  Moreover, this is the strength of our construction technique.  Namely,   we can construct a code whose automorphism group must contain a given group.

\begin{exam}
Let $C$ be the extremal $[48,24,12]$ Pless symmetry code.  The automorphism group of this code is $PSL(2,47).$  A computation in GAP \cite{GAP} shows that the only subgroup of $PSL(2,47)$ of order 48 is $D_{48}$.  Hence the only possible construction of this code by our technique must have $G=D_{48}$.  This construction is given by McLoughlin in \cite{McLoughlin}.
\end{exam}

Combining the results in \cite{Bouyuklieva}, \cite{Boyuklieva-2}, \cite{Nebe}, \cite{OBrien} and \cite{Yankov}, we have that the automorphism group of a putative $[72,36,16]$ code must have order 5 or have order dividing 24, see \cite{Open} for details on the automorphism group and a detailed description of this putative code.  Since it is impossible for a group of order 72 to satisfy these we have the following corollary.

\begin{cor} \label{corollary:72} 
The putative $[72,36,16]$ code cannot be of the form $C(v)$ for any $v\in  \FF_2G$ for any group $G.$
\end{cor}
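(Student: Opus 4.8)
The plan is to combine Corollary~\ref{corollary:autogroup} with the external input about the automorphism group of the putative $[72,36,16]$ code. The key observation is that if such a code could be written as $C(v)$ for some $v \in \FF_2 G$, then $G$ would have to be a group of order $72$, since the length of the code equals the order of the group. By Corollary~\ref{corollary:autogroup}, the automorphism group of $C(v)$ must contain $G$ as a subgroup, so in particular $72$ must divide the order of the automorphism group of the code.

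First I would invoke the cited classification results (from \cite{Bouyuklieva}, \cite{Boyuklieva-2}, \cite{Nebe}, \cite{OBrien}, and \cite{Yankov}), which together establish that the automorphism group of any putative $[72,36,16]$ Type~II code has order $5$ or order dividing $24$. Next I would observe that a group of order $72$ cannot be a subgroup of a group whose order is $5$ or divides $24$: by Lagrange's theorem, if $G \le \mathrm{Aut}(C)$ then $|G| = 72$ must divide $|\mathrm{Aut}(C)|$, yet $72 \nmid 5$ and $72 \nmid 24$ (indeed $72 > 24$ and $72 > 5$, so no multiple of $72$ can divide a number that is itself at most $24$). This is an immediate contradiction.

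The proof is therefore a short deduction rather than a computation: assume for contradiction that $C = C(v)$ with $C$ the putative code and $v \in \FF_2 G$, apply Corollary~\ref{corollary:autogroup} to force $G \cong$ a subgroup of $\mathrm{Aut}(C)$ of order $72$, and then contradict the order constraints supplied by the literature. The only genuine content lies outside the argument itself, in the cited classification of the possible automorphism groups; there is essentially no obstacle in the logical step, since Lagrange's theorem disposes of it immediately.

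The main subtlety to be careful about is simply making explicit that the length of $C(v)$ is $n = |G|$, so that any group realizing the $[72,36,16]$ code via the construction must have order exactly $72$. Once this is stated, the incompatibility between $|G| = 72$ and the permissible automorphism group orders (which are bounded by $24$, apart from the prime $5$) yields the result. I expect no real obstacle here; the corollary does all the heavy lifting, and the statement follows as a one-line consequence.
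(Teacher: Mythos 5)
Your proof is correct and follows exactly the paper's own argument: invoke Corollary~\ref{corollary:autogroup} together with the cited classification of the automorphism group (order $5$ or dividing $24$), and conclude via Lagrange's theorem that a group of order $72$ cannot embed in it. Your explicit remark that the length of $C(v)$ forces $|G|=72$ is a detail the paper leaves implicit, but the route is the same.
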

\begin{proof}
The result follows immediately from Corollary~\ref{corollary:autogroup} and the previous discussion. 
\end{proof}

Note that a code whose automorphism group is trivial cannot be constructed by this technique.  For example, in \cite{Lam}, it was shown that if a projective plane of order 10 existed there would be a $[112,56,12]$ self-dual code with no weight 16 vectors that had a trivial automorphism group.  This code was shown not to exist.  Such a code could not be constructed in a group ring.  

The following is a rephrasing in more general terms of Theorem~1 in  \cite{Hurley1}.  Specifically, in \cite{Hurley1}, $R$ is assumed to be a field.  The proof is identical and simply consists of showing that addition and multiplication is preserved.  

\begin{thm} \label{theorem:homomorphism} 
Let $R$ be a finite commutative Frobenius ring and let $G$ be a group of order $n$. 
Then the map $\sigma: RG \rightarrow M_n(R)$ is an injective ring homomorphism.  
\end{thm}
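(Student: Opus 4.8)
The plan is to verify directly that $\sigma$ preserves the two ring operations and is injective, since these are exactly the three things an injective ring homomorphism requires. First I would set up notation: write $v = \sum_i \alpha_i g_i$ and $w = \sum_i \beta_i g_i$ in $RG$, and recall from \eqref{equation:construction} that the $(i,j)$ entry of $\sigma(v)$ is $\alpha_{g_i^{-1}g_j}$, where I index the coefficient of a group ring element by the group element it multiplies. The additive part is immediate: since the coefficients of $v+w$ are $\alpha_{g_i^{-1}g_j} + \beta_{g_i^{-1}g_j}$ entrywise, we get $\sigma(v+w) = \sigma(v)+\sigma(w)$ at once.

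The multiplicative part is the heart of the argument. Recall from the definition of the product in $RG$ that the coefficient of a group element $h$ in $vw$ is $\sum_{g_jg_k = h} \alpha_{g_j}\beta_{g_k}$. I would compute the $(i,j)$ entry of $\sigma(vw)$, which is the coefficient of $g_i^{-1}g_j$ in $vw$, namely $\sum_{g_sg_t = g_i^{-1}g_j}\alpha_{g_s}\beta_{g_t}$, and compare it to the $(i,j)$ entry of the matrix product $\sigma(v)\sigma(w)$, which is $\sum_{\ell} \alpha_{g_i^{-1}g_\ell}\,\beta_{g_\ell^{-1}g_j}$. The key step is a reindexing: for each fixed $i,j$, as $\ell$ ranges over $\{1,\dots,n\}$ the pair $(g_i^{-1}g_\ell,\, g_\ell^{-1}g_j)$ ranges over exactly the pairs $(s,t)$ of group elements with $st = g_i^{-1}g_j$, and this correspondence is a bijection (parametrized by $g_\ell$). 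Substituting $s = g_i^{-1}g_\ell$ and $t = g_\ell^{-1}g_j$ shows the two sums agree, so $\sigma(vw) = \sigma(v)\sigma(w)$. I would also check $\sigma$ sends the identity $e_G$ of $RG$ to $I_n$, which follows since its $(i,j)$ entry is the coefficient of $g_i^{-1}g_j$ in $e_G$, equal to $1$ when $i=j$ and $0$ otherwise.

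Injectivity is the simplest piece: because the first row of $\sigma(v)$ (taking $g_1 = e_G$, as is natural) has entries $\alpha_{g_j}$ running over all the coefficients of $v$, the matrix $\sigma(v)$ determines $v$ completely; equivalently $\sigma(v) = 0$ forces every coefficient $\alpha_i = 0$, so the kernel is trivial. The main obstacle is purely bookkeeping in the multiplicative step: one must be careful that the indexing convention for the matrix entries (using $g_i^{-1}g_j$) matches the group-ring multiplication rule, and that the reindexing over $\ell$ is genuinely a bijection onto the factorization pairs of $g_i^{-1}g_j$. There is no deep content beyond this verification, which is why the authors remark that the proof is identical to the field case in \cite{Hurley1}; commutativity and the Frobenius hypothesis on $R$ play no role here, since the argument only uses that $R$ is an associative ring with identity.
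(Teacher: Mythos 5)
Your proof is correct and is exactly the verification the paper has in mind: the paper itself gives no details, deferring to Hurley's Theorem~1 (proved there for fields) with the remark that the proof ``simply consists of showing that addition and multiplication is preserved,'' which is precisely your entrywise computation and reindexing bijection. Your added observations---that $\sigma(e_G)=I_n$, that injectivity follows from reading the coefficients off the first row, and that neither commutativity nor the Frobenius property of $R$ is used---are accurate and fill in the routine details the paper omits.
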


For an element $v = \sum \alpha_i g_i \in RG$, define the element $v^T \in RG$ as 
$v^T = \sum \alpha_i g_i^{-1}.$   This is sometimes known as the canonical involution for the group ring. The reason this notation is used in this setting will be apparent by the next lemma.

The following is a straightforward generalization of a result in \cite{Hurley1}. 

\begin{lem} \label{lemma:transpose}
Let $R$ be a finite commutative Frobenius ring and let $G$ be a group of order $n$. 
For an element  $v \in RG$, we have that $\sigma(v)^T = \sigma(v^T).$  
\end{lem}
\begin{proof}
The $ij$-th element of $\sigma(v^T)$ is $\alpha_{(g_i^{-1} g_j)^{-1}} = \alpha _{g_j^{-1} g_i}$ which is the $ji$-th element of $\sigma(v).$ 
\end{proof}

We next give our first result about the structure of our constructed codes.

\begin{lem} \label{lemma:selforthogonal}
Let $R$ be a finite commutative Frobenius ring and let $G$ be a group of order $n$. 
If $v=v^T$ and $v^2=0$ then $C_v$ is  a self-orthogonal code. 
\end{lem}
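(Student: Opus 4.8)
The plan is to reduce self-orthogonality to a single matrix identity, namely $\sigma(v)\sigma(v)^T = 0$, and then verify that identity by exploiting that $\sigma$ is a ring homomorphism together with the transpose lemma.

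First I would observe that $C_v = C(v)$ is by definition the row space of $\sigma(v)$. Since the inner product $[\cdot,\cdot]$ is bilinear over the commutative ring $R$, a linear code generated by a set of vectors is self-orthogonal exactly when every pair among the generating vectors is orthogonal. The point is that the $(i,j)$ entry of $\sigma(v)\sigma(v)^T$ is precisely $[\,\text{row}_i,\ \text{row}_j\,]$, the inner product of the $i$-th and $j$-th rows of $\sigma(v)$. Hence $C_v$ is self-orthogonal if and only if $\sigma(v)\sigma(v)^T$ is the $n\times n$ zero matrix. Note that the diagonal entries, which record $[\vc,\vc]$ for each row $\vc$, are automatically covered, so no separate argument for the ``same row'' case is required.

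Next I would compute $\sigma(v)\sigma(v)^T$ using the results already established. By Lemma~\ref{lemma:transpose} we have $\sigma(v)^T = \sigma(v^T)$, so $\sigma(v)\sigma(v)^T = \sigma(v)\sigma(v^T)$. Since $\sigma$ is a ring homomorphism by Theorem~\ref{theorem:homomorphism}, it preserves products, and therefore $\sigma(v)\sigma(v^T) = \sigma(v\,v^T)$. Finally I would invoke the hypotheses: because $v = v^T$ we get $v\,v^T = v^2$, and the assumption $v^2 = 0$ gives $v\,v^T = 0$. As $\sigma$ sends the zero element of $RG$ to the zero matrix, we conclude $\sigma(v)\sigma(v)^T = \sigma(0) = 0$, and by the reduction of the first step $C_v \subseteq C_v^\perp$.

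I do not expect a genuine obstacle here; the argument is essentially the composition of Lemma~\ref{lemma:transpose} and Theorem~\ref{theorem:homomorphism}, and the content of the hypotheses $v = v^T$, $v^2 = 0$ is exactly what is needed to force $v\,v^T = 0$. The only step meriting a word of care is the initial reduction, where one must note that self-orthogonality of the \emph{entire} row space follows from pairwise orthogonality of a generating set; this uses nothing beyond bilinearity of $[\cdot,\cdot]$ over $R$.
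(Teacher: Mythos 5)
Your proposal is correct and follows essentially the same route as the paper: both arguments combine Lemma~\ref{lemma:transpose} ($\sigma(v)^T=\sigma(v^T)$) with the homomorphism property of $\sigma$ to conclude $\sigma(v)\sigma(v)^T=\sigma(v^2)=\mathbf{0}$, and then read off self-orthogonality from the fact that the entries of this product are the inner products of the rows of $\sigma(v)$. Your explicit remark that pairwise orthogonality of the generating rows suffices (by bilinearity) is a point the paper leaves implicit, but it is the same proof.
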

\begin{proof}
If $v=v^T$ then $\sigma(v)^T = \sigma(v^T)$ by Lemma~\ref{lemma:transpose}.  Then we have that $(\sigma(v)\sigma(v))_{ij} $ is the inner-product of the $i$-th and $j$-th rows of $\sigma(v)$.   Since $v^2=0$, by Theorem~\ref{theorem:homomorphism} we have that $\sigma(v)\sigma(v) = {\bf 0}.$  This gives that any two rows of $\sigma(v)$ are orthogonal and hence they generate a self-orthogonal code.
\end{proof}

We can now use this lemma  to construct self-dual codes.  For codes over fields we could simply use the dimension of $\sigma(v)$, however over an arbitrary Frobenius ring we cannot determine the size of the generated code simply from the rank of the matrix.  Therefore, we have the following theorem.

\begin{thm} \label{theorem:constructionsd}
Let $R$ be a finite commutative Frobenius ring and let $G$ be a group of order $n$ and let $v \in RG$.
If $v=v^T$, $v^2=0$ and $|C_v| = |R|^{\frac{n}{2}}$ then $C_v$ is a self-dual code.
\end{thm}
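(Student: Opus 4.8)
The plan is to combine the self-orthogonality already secured in Lemma~\ref{lemma:selforthogonal} with the cardinality constraint supplied by the MacWilliams relations. Since the hypotheses include $v = v^T$ and $v^2 = 0$, Lemma~\ref{lemma:selforthogonal} applies directly and yields that $C_v$ is self-orthogonal, that is, $C_v \subseteq C_v^\perp$. This containment is the first of the two ingredients I need, and it costs nothing beyond quoting the previous lemma.

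Next I would invoke the MacWilliams relations, which hold precisely because $R$ is a finite commutative Frobenius ring, to obtain the identity $|C_v|\,|C_v^\perp| = |R|^n$. Substituting the hypothesis $|C_v| = |R|^{n/2}$ and solving for the size of the orthogonal gives $|C_v^\perp| = |R|^n / |R|^{n/2} = |R|^{n/2}$, so that $|C_v| = |C_v^\perp|$. This is the step that genuinely uses the Frobenius assumption: over an arbitrary ring the size of a code is not determined by the rank of a generating matrix, and it is exactly the MacWilliams cardinality identity that lets us reason about $|C_v^\perp|$ at all, a subtlety the text has already flagged in the remarks preceding the theorem.

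Finally I would close the argument by a finiteness observation. Because $R$ is finite, both $C_v$ and $C_v^\perp$ are finite sets; a finite set cannot be a proper subset of a set of the same cardinality. Hence the containment $C_v \subseteq C_v^\perp$ together with the equality $|C_v| = |C_v^\perp|$ forces $C_v = C_v^\perp$, which is by definition the assertion that $C_v$ is self-dual.

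Since every step reduces to a result already established in the excerpt, I do not expect any serious obstacle here; the theorem is essentially the formal packaging of Lemma~\ref{lemma:selforthogonal} with the counting provided by the MacWilliams relations. If there is any delicate point, it is simply the insistence that the cardinality hypothesis $|C_v| = |R|^{n/2}$ is indispensable, since over a general Frobenius ring one cannot infer it from the matrix $\sigma(v)$ alone.
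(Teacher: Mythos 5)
Your proof is correct and follows the same route as the paper: apply Lemma~\ref{lemma:selforthogonal} to get $C_v \subseteq C_v^\perp$, then use the cardinality hypothesis to upgrade containment to equality. The paper's own proof is just a terser version of this; your expansion, spelling out that the Frobenius hypothesis supplies $|C_v|\,|C_v^\perp| = |R|^n$ and that a finite set cannot properly contain a set of equal size, makes explicit exactly what the paper leaves implicit.
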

\begin{proof}
By Lemma~\ref{lemma:selforthogonal} the code $C_v$ is self-orthogonal and since $|C_v| = |R|^{\frac{n}{2}}$ we have that $C_v$ is self-dual.
\end{proof}

Notice that unlike the field case we are not assuming that $n$ is even.  For example, let  $R=R_k$ and let $G$ be the trivial group of size 1 and let $v= u_i e_G$ where $e_G$ is the identity of the group.  Then $\sigma(v) = ( u_i)$ and $C_v$ is a self-dual code of length 1. 

In the following example, we show the strength of this construction by constructing a code over $R_1$ using the alternating group on 4 letters which has an image under the associated Gray map of the length 24 extended Golay code.  

\begin{exam} \label{example:Golayring} 
We shall use the previous results to construct the binary Golay code from the ring $R_1$.
Let $v=u(b+ab+ac+bc^2)+(bc+bc^2)+(1+u)(c^2+abc^2) \in R_1 A_4$. Then, 
 $C_v $ is a self-dual code of length 12 over $R_1$.  Hence $\phi_k(C)$ is a binary self-dual code of length 12 by Theorem~\ref{theorem:images}. 
 The binary code $\phi_k(C)$ has a generator matrix of the following form:
$\begin{pmatrix} I_{12} & A \end{pmatrix}$ where $A=\left(\begin{smallmatrix}
1&0&1&1&0&0&1&0&1&1&0&1\\
1&1&1&0&0&1&1&0&1&0&1&0\\
1&1&1&1&1&0&0&0&0&1&1&0\\
1&0&1&0&1&0&0&1&1&0&1&1\\
1&0&0&1&1&1&1&0&0&0&1&1\\
1&1&0&0&1&1&0&0&1&1&0&1\\
1&1&0&1&0&1&1&1&0&1&0&0\\
0&1&1&0&1&0&1&1&1&1&0&0\\
0&1&0&1&1&1&0&1&1&0&1&0\\
0&0&1&1&1&1&0&1&0&1&0&1\\
0&1&1&1&0&0&1&1&0&0&1&1\\
0&0&0&0&0&1&1&1&1&1&1&1\\
\end{smallmatrix} \right)$. It is a simple computation to see that  $\phi_k(C_v)$ is the $[24,12,8]$ Golay code.
\end{exam}

%\begin{lem} \label{lemma:orthogonal2}
%Let $R$ be a finite commutative Frobenius ring and let $G$ be a group of order $n$. 
%Let $v \in RG$ and let $\langle w_1,w_2,\dots, w_s \rangle  = Ann(\langle v \rangle).$ 
%Then $C_v^\perp = \langle C_{w_1}, C_{w_2}, \dots, C_{w_s}\rangle. $
%\end{lem}
%\begin{proof}
%From Theorem~\ref{theorem:homomorphism}, we have that $\sigma(v) \sigma(w) =0$ if and only if $vw=0$.  This gives the result. 
%\end{proof}

%Note that if $C_v$ is a self-dual code then $Ann(\langle v \rangle ) = \langle v \rangle.$ 
%If $Ann(\langle v \rangle)$ is principal then there is a $w \in RG$ with $(C_v)^\perp = C_w.$  

\begin{lem} \label{lemma:equivalent}
Let $R$ be a finite commutative Frobenius ring and let $G$ be a group of order $n$. 
If $v = \sum \alpha_i g_i$ and $w = \alpha_i g_ih$ for some $h \in G$ then $C_v$ and $C_w$ are equivalent codes.
\end{lem}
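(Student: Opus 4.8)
The plan is to identify $w$ as a right multiple of $v$ and then show that right multiplication by a group element acts on the coordinates of $R^n$ purely as a permutation, with no rescaling, so that it realizes an equivalence in the sense of the paper. First I would observe that, writing $v = \sum_i \alpha_i g_i$, the hypothesis gives $w = \sum_i \alpha_i g_i h = v h$; that is, $w$ is simply $v$ multiplied on the right by the fixed group element $h$. Next I would invoke the description of the rows of $\sigma$ established in the proof that $I(v)$ is a left ideal: the rows of $\sigma(v)$ are exactly the coordinate vectors, under the identification $RG \cong R^n$ sending $\sum_i \beta_i g_i$ to $(\beta_1, \dots, \beta_n)$, of the elements $g_i v$ as $g_i$ ranges over $G$. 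Consequently the rows of $\sigma(w) = \sigma(vh)$ are the coordinate vectors of $g_i v h = (g_i v)h$, i.e. each generating row of $C_w$ is obtained from the corresponding generating row of $C_v$ by right multiplication by $h$ in $RG$.

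The key step is then to verify that right multiplication by $h$ induces a coordinate permutation on $R^n$. I would define $\rho_h : RG \to RG$ by $\rho_h(x) = xh$. Since $h$ is a unit in $RG$, the map $\rho_h$ is an $R$-linear bijection, and because it sends the basis element $g_j$ to $g_j h = g_{\pi(j)}$ for a fixed permutation $\pi$ of $\{1, \dots, n\}$, under the coordinate identification $\rho_h$ is exactly the permutation of coordinates determined by $\pi$. In particular it carries no scaling factors, which is precisely what the paper's definition of equivalence requires. Applying $\rho_h$ to the spanning set of $C_v$ given by the rows of $\sigma(v)$ yields the spanning set of $C_w$ given by the rows of $\sigma(w)$, so $C_w = \rho_h(C_v)$ is the image of $C_v$ under this coordinate permutation, and the two codes are equivalent.

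The only real subtlety, and the step I would be most careful about, is checking that $\rho_h$ acts as a genuine coordinate permutation and not merely as a monomial transformation: because the paper explicitly excludes multiplication of coordinates by units from its notion of equivalence, it matters that right multiplication by a group element permutes the basis $\{g_1, \dots, g_n\}$ without introducing any unit coefficients. This follows immediately from the group axioms, since right translation $g \mapsto gh$ is a bijection of $G$, so the argument goes through; everything else is a routine bookkeeping check using Theorem~\ref{theorem:homomorphism} together with the row description of $\sigma$.
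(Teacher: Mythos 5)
Your proof is correct and follows essentially the same route as the paper: the paper's one-line argument is that the generator matrix of $C_w$ arises from that of $C_v$ by permuting columns according to right multiplication by $h$, which is exactly the coordinate permutation $\rho_h$ you construct. Your version just makes explicit the row identification ($i$-th row of $\sigma(v)$ corresponds to $g_iv$) and the fact that right translation permutes the basis $G$ without introducing unit factors.
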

\begin{proof}
The generator matrix for $C_w$ is formed from the generator matrix of $C_v$ by permuting the columns corresponding to multiplication of the elements of $G$ by $h$.  Hence the codes are equivalent.  
\end{proof}

%\begin{thm}
%Let $R$ be a finite commutative Frobenius ring and let $G$ be a group of order $n$. 
%If $v = \sum \alpha_i g_i$ and $w = \alpha_i g_ih$ for some $h \in G$ and $Ann(\langle v \rangle) = \langle w \rangle$ then
%$C_v$ and $C_w$ are isodual codes and hence are formally self-dual codes with respect to any weight enumerator.
%\end{thm}
%\begin{proof}
%By Lemma~\ref{lemma:equivalent} we have that $C_v$ and $C_w$ are equivalent.  Then since $Ann(\langle v \rangle) = \langle w \rangle$ we %have by Lemma~\ref{lemma:orthogonal2} that $C_v^\perp = C_w.$  This gives that the codes are isodual.  Since they are isodual codes  all of %their weight enumerators are the same.
%\end{proof}

%\begin{lem} \label{lemma:useful}
%If $(I_k \ | B )$ is a generator matrix for a code over a finite commutative Frobenius ring of characteristic 2 of length $n=2k$  %with $B$ symmetric then 
%$\langle (I_k \ | B ) \rangle$ is an isodual code and hence formally self-dual with respect to any weight enumerator. 
%\end{lem} 
%\begin{proof} 
%Consider the code $C' = \langle (B | I_k) \rangle $
%Since $n=2k$ we have that $|C| |C'| = |R|^n.$ So all we need to do to show that $C'=C^\perp$ is to show orthogonality. 
 %The inner-product of the $i$-th row of $ (I_k | B)$ and the $j$-th row of $(B|I_k)$ is $B_{i,j} + B_{j,i} = 0$ since $B_{i,j} = %B_{j,i}$  and the characteristic of the ring is 2. 
 
 %Then, it is clear that $C$ and $C'$ are equivalent since one is formed by permuting the columns of the other and hence they %are isodual.
 %\end{proof}

\begin{exam} Let $v_1=1+xz+yz+xyz \in \mathbb{F}_2(C_2 \times C_2 \times C_2)$ where $\langle x,y,z \rangle \cong C_2 \times C_2 \times C_2$. Now $\sigma(v_1)$ is equivalent to 
$\left(\begin{smallmatrix}
1&0&0&0&0&1&1&1\\
0&1&0&0&1&0&1&1\\
0&0&1&0&1&1&0&1\\
0&0&0&1&1&1&1&0
\end{smallmatrix} \right)$.  The code $C(v_1)$ is the the $[8,4,4]$ extended Hamming code. Next, let us consider $v_2=(1+xz+yz+xyz)y=y+xz+z+xyz \in \mathbb{F}_2(C_2 \times C_2 \times C_2)$. Then $\sigma(v_2)$ is equivalent to
$\left(\begin{smallmatrix}
1&0&0&1&0&0&1&1\\
0&1&0&1&0&1&0&1\\
0&0&1&1&0&1&1&0\\
0&0&0&0&1&1&1&1
\end{smallmatrix} \right)$. Clearly $C(v_1)$ is equivalent to $C(v_2)$.
\end{exam}

\subsection{Binary Golay Code}  

We shall consider constructions of the $[24,12,8]$ binary Golay code from $\mathbb{F}_2G$. Clearly, the automorphism group of the $[24,12,8]$ code is the Mathieu group $M_{24}$ and the only possible groups are
\[SL(2,3), \,D_{24},\,(C_6\times C_2) \rtimes C_2, C_3 \times D_8, C_2 \times A_4 \;\text{and}\;C_2^2 \times D_6\footnote{These groups are SmallGroup(24,$i$)
for $i \in \{3,6,8,10,12,13,14\}$ according to the GAP system \cite{GAP}.}.\]

Initially, it was shown in \cite{BLM} that the $[24,12,8]$ could be constructed from ideals in the group algebra $\mathbb{F}_2S_4$ where $S_4$ is the symmetric group on $4$ elements. In \cite{IT}, the $[24,12,8]$ code was constructed from $\mathbb{F}_2D_{24}$. We shall now separately consider the remaining cases.
\begin{itemize}
\item The group $C_3 \times D_8$\\
Let
\[
v=\sum_{i=1}^4[a^{i-1}(\alpha_{i}+\alpha_{i+4}z+\alpha_{i+8}z^2)+ba^{i-1}(\alpha_{i+12}+\alpha_{i+16}z+\alpha_{i+20}z^2)] \in {\FF}_{2}(C_3 \times D_8)
\]
where $\langle z \rangle =C_3$, $\langle a,b \rangle =D_8$ and $\alpha_i \in \mathbb{F}_2$. Now
\[
\sigma(v)=
\begin{pmatrix}
A&B\\B&A
\end{pmatrix}
\]

where $A=\begin{pmatrix}A_1&A_2&A_3\\A_3&A_1&A_2\\A_2&A_3&A_1  \end{pmatrix}$, $B=\begin{pmatrix}B_1&B_2&B_3\\B_3&B_1&B_2\\B_2&B_3&B_1 \end{pmatrix}$,
\begin{eqnarray*}
A_1&=&cir(\alpha_{1},\alpha_{2},\alpha_{3},\alpha_{4}),\\
A_2&=&cir(\alpha_{5},\alpha_{6},\alpha_{7},\alpha_{8}),\\
A_3&=&cir(\alpha_{9},\alpha_{10},\alpha_{11},\alpha_{12}),\\
B_1&=&rcir(\alpha_{13},\alpha_{14},\alpha_{15},\alpha_{16}),\\
B_2&=&rcir(\alpha_{17},\alpha_{18},\alpha_{19},\alpha_{20}), \\
B_3&=&rcir(\alpha_{21},\alpha_{22},\alpha_{23},\alpha_{24}) \end{eqnarray*} 
and $cir(\alpha_{1},\alpha_{2},\ldots,\alpha_{n})$, $rcir(\alpha_{1},\alpha_{2},\ldots,\alpha_{n})$ are circulant and reverse circulant matrices  respectively and $\alpha_{1},\alpha_{2},\ldots,\alpha_{n}$ is the first row of the respective matrices.
Clearly $\langle \sigma(v) \rangle$ is self dual if $\sigma(v)^T=\sigma(v)$. Now, $\sigma(v)^T=\sigma(v)$ if and only if $a_2=a_4$, $a_5=a_9$, $a_6=a_{12}$, $a_7=a_{11}$, $a_8=a_{10}$, $a_{17}=a_{21}$, $a_{18}=a_{22}$, $a_{19}=a_{23}$ and $a_{20}=a_{24}$. Next, consider elements of $\mathbb{F}_2(C_3 \times D_8)$ of the form
\[ \begin{split}
\left\{ \right.&\alpha_1+\alpha_2(a+a^3)+\alpha_3a^2+\alpha_4(z+z^2)+\alpha_5az(1+a^2z)+\alpha_6a^2z(1+z)+\alpha_7az(a^2+z)\\
&+\sum_{i=1}^{4}b(\alpha_{i+7}+\alpha_{i+11}(z+z^2))a^{i-1} \,|\, \alpha_i \in \mathbb{F}_2 \left. \right\}
\end{split}  \]
\noindent and in particular the element $v_1=1+b[(\hat{a}+1)+(1+a)(\hat{z}+1)] $ of this set where $\hat{a}=\sum_{i=0}^3a^i$ and $\hat{z}=\sum_{i=0}^2z^i$. The matrix $\sigma(v_1)$ is equivalent to
\[ \begin{pmatrix} I & A \\A &I \end{pmatrix} \]
where
\[A=\left(\begin{smallmatrix}0&1&1&1&1&1&0&0&1&1&0&0\\
1&1&1&0&1&0&0&1&1&0&0&1\\
1&1&0&1&0&0&1&1&0&0&1&1\\
1&0&1&1&0&1&1&0&0&1&1&0\\
1&1&0&0&0&1&1&1&1&1&0&0\\
1&0&0&1&1&1&1&0&1&0&0&1\\
0&0&1&1&1&1&0&1&0&0&1&1\\
0&1&1&0&1&0&1&1&0&1&1&0\\
1&1&0&0&1&1&0&0&0&1&1&1\\
1&0&0&1&1&0&0&1&1&1&1&0\\
0&0&1&1&0&0&1&1&1&1&0&1\\
0&1&1&0&0&1&1&0&1&0&1&1  \end{smallmatrix}\right).
\]
It is a small computation to see that $C(v_1)$ is the $[24, 12, 8]$ code. Moreover, it can be shown that the above set contains $128$ elements that
generate the $[24, 12, 8]$ code.

\item The group $C_2 \times A_4$\\
Let
\[
\begin{split}v&= \sum_{i=1}^3(\alpha_{4i-3}+\alpha_{4i-2}a+\alpha_{4i-1}b+\alpha_{4i}ab
+\alpha_{4i+9}x+\alpha_{4i+10}xa+\alpha_{4i+11}xb+\alpha_{4i+21}xab)c^{i-1} \\
&\in {\FF}_{2}(C_2 \times A_4)
\end{split}\]
where $\langle x \rangle =C_2$, $a=(1,2)(3,4)$, $b=(1,3)(2,4)$ and $c=(1,2,3)$ and $\alpha_i \in \mathbb{F}_2$. Now
\[
\sigma(v)=
\begin{pmatrix}
A&B\\B&A
\end{pmatrix}
\]
where $A=\begin{pmatrix}A_2&A_2&A_3\\A_4&A_5&A_6\\A_7&A_8&A_9  \end{pmatrix}$, $B=\begin{pmatrix}B_2&B_2&B_3\\B_4&B_5&B_6\\B_7&B_8&B_9 \end{pmatrix}$, \newline
$A_1=bc(\alpha_{1},\alpha_{2},\alpha_{3},\alpha_{4})$,
$A_2=bc(\alpha_{5},\alpha_{6},\alpha_{7},\alpha_{8})$, 
$A_3=bc(\alpha_{9},\alpha_{10},\alpha_{11},\alpha_{12})$, \newline
$A_4=bc(\alpha_{9},\alpha_{12},\alpha_{10},\alpha_{11})$,
$A_5=bc(\alpha_{1},\alpha_{4},\alpha_{2},\alpha_{3})$,
$A_6=bc(\alpha_{5},\alpha_{8},\alpha_{6},\alpha_{7})$,\newline
$A_7=bc(\alpha_{5},\alpha_{7},\alpha_{8},\alpha_{6})$,
$A_8=bc(\alpha_{9},\alpha_{11},\alpha_{12},\alpha_{10})$, 
$A_9=bc(\alpha_{1},\alpha_{3},\alpha_{4},\alpha_{2})$,\newline
$B_1=bc(\alpha_{13},\alpha_{14},\alpha_{15},\alpha_{16})$,
$B_2=bc(\alpha_{17},\alpha_{18},\alpha_{19},\alpha_{20})$,
$B_3=bc(\alpha_{21},\alpha_{22},\alpha_{23},\alpha_{24})$,\newline
$B_4=bc(\alpha_{21},\alpha_{24},\alpha_{22},\alpha_{23})$,
$B_5=bc(\alpha_{13},\alpha_{16},\alpha_{14},\alpha_{15})$,
$B_6=bc(\alpha_{17},\alpha_{20},\alpha_{18},\alpha_{19})$,\newline
$B_7=bc(\alpha_{17},\alpha_{19},\alpha_{20},\alpha_{18})$,
$B_8=bc(\alpha_{21},\alpha_{23},\alpha_{24},\alpha_{22})$ and
$B_9=bc(\alpha_{13},\alpha_{15},\alpha_{16},\alpha_{14})$ \newline where $bc(a,b,c,d)$ is a matrix that takes the form
$\left( \begin{smallmatrix}a&b&c&d\\b&a&d&c\\c&d&a&b\\d&c&b&a  \end{smallmatrix} \right)$. Now, $\sigma(v)=\sigma(v)^{T}$ if and only if
$a_5=a_9$, $a_6=a_{12}$, $a_7=a_{10}$, $a_8=a_{11}$, $a_{17}=a_{21}$, $a_{18}=a_{24}$, $a_{19}=a_{24}$ and $a_{20}=a_{23}$. Next, consider elements of $\mathbb{F}_2(C_2 \times A_4)$ of the form
\[ \begin{split}
\{  &\sum_{i=0}^1x^i(  (\alpha_{8i+1}+\alpha_{8i+2}a+\alpha_{8i+3}b+\alpha_{8i+4}ab)
+ 
 \\
&  (\alpha_{8i+5}+\alpha_{8i+6}a+\alpha_{8i+7}b+\alpha_{8i+8}ab)(c+c^2)  )  
 \,|\, \alpha_i \in \mathbb{F}_2  \},
\end{split}  \]
 and in particular the element $v_1=1+x(1+b(1+a)(1+c^2))+xa(1+b)c$ of this set.  The matrix $\sigma(v_1)$ is equivalent to
\[ \begin{pmatrix} I & A \\A &I \end{pmatrix} \]
where
\[A=\left(\begin{smallmatrix}1&0&1&1&0&1&0&1&0&0&1&1\\
0&1&1&1&1&0&1&0&0&0&1&1\\
1&1&1&0&0&1&0&1&1&1&0&0\\
1&1&0&1&1&0&1&0&1&1&0&0\\
0&1&0&1&1&1&0&1&0&1&1&0\\
1&0&1&0&1&1&1&0&1&0&0&1\\
0&1&0&1&0&1&1&1&1&0&0&1\\
1&0&1&0&1&0&1&1&0&1&1&0\\
0&0&1&1&0&1&1&0&1&1&1&0\\
0&0&1&1&1&0&0&1&1&1&0&1\\
1&1&0&0&1&0&0&1&1&0&1&1\\
1&1&0&0&0&1&1&0&0&1&1&1  \end{smallmatrix}\right).
\]
It is a small computation to see that $C(v_1)$ is the $[24, 12, 8]$ code. Moreover, it can be shown that the above set contains $384$ elements that
generate the $[24, 12, 8]$ code.

\item The group $G=(C_6 \times C_2) \rtimes C_2$\\
Let
\[
\begin{split}v&= \sum_{i=1}^4(\alpha_{i}y^{i-1}+\alpha_{i+4}xy^{i-1}+\alpha_{i+8}x^2y^{i-1}+\alpha_{i+12}y^{i-1}z
+\alpha_{i+16}xy^{i-1}z+\alpha_{i+20}x^2y^{i-1}z)\\
& \in {\FF}_{2}((C_6 \times C_2) \rtimes C_2)
\end{split}\]
where $(C_6 \times C_2) \rtimes C_2=\langle x,y,z \,|\,x^3=y^4=z^2=1,\,xy=yx^{2},\,xz=zx,\,yz=zy^3 \rangle$ and $\alpha_i \in \mathbb{F}_2$. Now,
\[
\sigma(v)=\left(
\begin{smallmatrix}
\G_{1}&\G_{2}&\G_{3}&\G_{4}&\G_{5}&\G_{6}&\G_{7}&\G_{8}&\G_{9}&\G_{10}&\G_{11}&\G_{12}&\G_{13}&\G_{14}&\G_{15}&\G_{16}&\G_{17}&\G_{18}&\G_{19}&\G_{20}&\G_{21}&\G_{22}&\G_{23}&\G_{24}\\
\G_{2}&\G_{1}&\G_{13}&\G_{10}&\G_{14}&\G_{12}&\G_{9}&\G_{18}&\G_{7}&\G_{4}&\G_{24}&\G_{6}&\G_{3}&\G_{5}&\G_{17}&\G_{22}&\G_{15}&\G_{8}&\G_{21}&\G_{23}&\G_{19}&\G_{16}&\G_{20}&\G_{11}\\
\G_{3}&\G_{13}&\G_{1}&\G_{14}&\G_{10}&\G_{16}&\G_{17}&\G_{23}&\G_{15}&\G_{5}&\G_{21}&\G_{22}&\G_{2}&\G_{4}&\G_{9}&\G_{6}&\G_{7}&\G_{20}&\G_{24}&\G_{18}&\G_{11}&\G_{12}&\G_{8}&\G_{19}\\
\G_{14}&\G_{10}&\G_{4}&\G_{1}&\G_{2}&\G_{7}&\G_{16}&\G_{24}&\G_{12}&\G_{13}&\G_{18}&\G_{15}&\G_{5}&\G_{3}&\G_{22}&\G_{17}&\G_{6}&\G_{21}&\G_{8}&\G_{11}&\G_{20}&\G_{9}&\G_{19}&\G_{23}\\
\G_{5}&\G_{4}&\G_{10}&\G_{2}&\G_{1}&\G_{9}&\G_{22}&\G_{11}&\G_{6}&\G_{3}&\G_{8}&\G_{17}&\G_{14}&\G_{13}&\G_{16}&\G_{15}&\G_{12}&\G_{19}&\G_{18}&\G_{24}&\G_{23}&\G_{7}&\G_{21}&\G_{20}\\
\G_{24}&\G_{11}&\G_{19}&\G_{17}&\G_{9}&\G_{1}&\G_{8}&\G_{4}&\G_{18}&\G_{15}&\G_{22}&\G_{2}&\G_{21}&\G_{7}&\G_{20}&\G_{3}&\G_{23}&\G_{10}&\G_{6}&\G_{5}&\G_{12}&\G_{13}&\G_{14}&\G_{16}\\
\G_{17}&\G_{9}&\G_{7}&\G_{19}&\G_{21}&\G_{23}&\G_{1}&\G_{6}&\G_{13}&\G_{11}&\G_{5}&\G_{18}&\G_{15}&\G_{24}&\G_{2}&\G_{8}&\G_{3}&\G_{22}&\G_{14}&\G_{12}&\G_{10}&\G_{20}&\G_{16}&\G_{4}\\
\G_{23}&\G_{18}&\G_{8}&\G_{6}&\G_{12}&\G_{14}&\G_{24}&\G_{1}&\G_{21}&\G_{22}&\G_{9}&\G_{10}&\G_{20}&\G_{16}&\G_{11}&\G_{4}&\G_{19}&\G_{13}&\G_{7}&\G_{2}&\G_{15}&\G_{5}&\G_{3}&\G_{17}\\
\G_{9}&\G_{17}&\G_{15}&\G_{11}&\G_{24}&\G_{18}&\G_{13}&\G_{22}&\G_{1}&\G_{19}&\G_{4}&\G_{23}&\G_{7}&\G_{21}&\G_{3}&\G_{20}&\G_{2}&\G_{6}&\G_{10}&\G_{16}&\G_{14}&\G_{8}&\G_{12}&\G_{5}\\
\G_{10}&\G_{14}&\G_{5}&\G_{13}&\G_{3}&\G_{15}&\G_{12}&\G_{21}&\G_{16}&\G_{1}&\G_{23}&\G_{7}&\G_{4}&\G_{2}&\G_{6}&\G_{9}&\G_{22}&\G_{24}&\G_{20}&\G_{19}&\G_{8}&\G_{17}&\G_{11}&\G_{18}\\
\G_{12}&\G_{6}&\G_{22}&\G_{18}&\G_{23}&\G_{21}&\G_{5}&\G_{9}&\G_{14}&\G_{8}&\G_{1}&\G_{19}&\G_{16}&\G_{20}&\G_{4}&\G_{11}&\G_{10}&\G_{7}&\G_{13}&\G_{17}&\G_{3}&\G_{24}&\G_{15}&\G_{2}\\
\G_{11}&\G_{24}&\G_{21}&\G_{15}&\G_{7}&\G_{2}&\G_{18}&\G_{10}&\G_{8}&\G_{17}&\G_{16}&\G_{1}&\G_{19}&\G_{9}&\G_{23}&\G_{13}&\G_{20}&\G_{4}&\G_{12}&\G_{14}&\G_{6}&\G_{3}&\G_{5}&\G_{22}\\
\G_{13}&\G_{3}&\G_{2}&\G_{5}&\G_{4}&\G_{22}&\G_{15}&\G_{20}&\G_{17}&\G_{14}&\G_{19}&\G_{16}&\G_{1}&\G_{10}&\G_{7}&\G_{12}&\G_{9}&\G_{23}&\G_{11}&\G_{8}&\G_{24}&\G_{6}&\G_{18}&\G_{21}\\
\G_{4}&\G_{5}&\G_{14}&\G_{3}&\G_{13}&\G_{17}&\G_{6}&\G_{19}&\G_{22}&\G_{2}&\G_{20}&\G_{9}&\G_{10}&\G_{1}&\G_{12}&\G_{7}&\G_{16}&\G_{11}&\G_{23}&\G_{21}&\G_{18}&\G_{15}&\G_{24}&\G_{8}\\
\G_{15}&\G_{7}&\G_{9}&\G_{21}&\G_{19}&\G_{20}&\G_{2}&\G_{12}&\G_{3}&\G_{24}&\G_{14}&\G_{8}&\G_{17}&\G_{11}&\G_{1}&\G_{18}&\G_{13}&\G_{16}&\G_{5}&\G_{6}&\G_{4}&\G_{23}&\G_{22}&\G_{10}\\
\G_{19}&\G_{21}&\G_{24}&\G_{7}&\G_{15}&\G_{3}&\G_{23}&\G_{14}&\G_{20}&\G_{9}&\G_{12}&\G_{13}&\G_{11}&\G_{17}&\G_{18}&\G_{1}&\G_{8}&\G_{5}&\G_{16}&\G_{10}&\G_{22}&\G_{2}&\G_{4}&\G_{6}\\
\G_{7}&\G_{15}&\G_{17}&\G_{24}&\G_{11}&\G_{8}&\G_{3}&\G_{16}&\G_{2}&\G_{21}&\G_{10}&\G_{20}&\G_{9}&\G_{19}&\G_{13}&\G_{23}&\G_{1}&\G_{12}&\G_{4}&\G_{22}&\G_{5}&\G_{18}&\G_{6}&\G_{14}\\
\G_{18}&\G_{23}&\G_{20}&\G_{22}&\G_{16}&\G_{10}&\G_{21}&\G_{13}&\G_{24}&\G_{6}&\G_{17}&\G_{14}&\G_{8}&\G_{12}&\G_{19}&\G_{5}&\G_{11}&\G_{1}&\G_{15}&\G_{3}&\G_{7}&\G_{4}&\G_{2}&\G_{9}\\
\G_{16}&\G_{22}&\G_{6}&\G_{23}&\G_{18}&\G_{24}&\G_{4}&\G_{17}&\G_{10}&\G_{20}&\G_{13}&\G_{11}&\G_{12}&\G_{8}&\G_{5}&\G_{19}&\G_{14}&\G_{15}&\G_{1}&\G_{9}&\G_{2}&\G_{21}&\G_{7}&\G_{3}\\
\G_{20}&\G_{8}&\G_{18}&\G_{12}&\G_{6}&\G_{5}&\G_{11}&\G_{2}&\G_{19}&\G_{16}&\G_{7}&\G_{4}&\G_{23}&\G_{22}&\G_{24}&\G_{10}&\G_{21}&\G_{3}&\G_{9}&\G_{1}&\G_{17}&\G_{14}&\G_{13}&\G_{15}\\
\G_{22}&\G_{16}&\G_{12}&\G_{20}&\G_{8}&\G_{11}&\G_{10}&\G_{15}&\G_{4}&\G_{23}&\G_{3}&\G_{24}&\G_{6}&\G_{18}&\G_{14}&\G_{21}&\G_{5}&\G_{17}&\G_{2}&\G_{7}&\G_{1}&\G_{19}&\G_{9}&\G_{13}\\
\G_{21}&\G_{19}&\G_{11}&\G_{9}&\G_{17}&\G_{13}&\G_{20}&\G_{5}&\G_{23}&\G_{7}&\G_{6}&\G_{3}&\G_{24}&\G_{15}&\G_{8}&\G_{2}&\G_{18}&\G_{14}&\G_{22}&\G_{4}&\G_{16}&\G_{1}&\G_{10}&\G_{12}\\
\G_{8}&\G_{20}&\G_{23}&\G_{16}&\G_{22}&\G_{4}&\G_{19}&\G_{3}&\G_{11}&\G_{12}&\G_{15}&\G_{5}&\G_{18}&\G_{6}&\G_{21}&\G_{14}&\G_{24}&\G_{2}&\G_{17}&\G_{13}&\G_{9}&\G_{10}&\G_{1}&\G_{7}\\
\G_{6}&\G_{12}&\G_{16}&\G_{8}&\G_{20}&\G_{19}&\G_{14}&\G_{7}&\G_{5}&\G_{18}&\G_{2}&\G_{21}&\G_{22}&\G_{23}&\G_{10}&\G_{24}&\G_{4}&\G_{9}&\G_{3}&\G_{15}&\G_{13}&\G_{11}&\G_{17}&\G_{1}
\end{smallmatrix} \right)
\]

and $\sigma(v)=\sigma(v)^{T}$ if and only if $a_4=a_{14}$, $a_6=a_{24}$, $a_7=a_{17}$, $a_8=a_{23}$, $a_{11}=a_{12}$, $a_{16}=a_{19}$ and $a_{21}=a_{22}$. Next, consider elements of $\mathbb{F}_2((C_6 \times C_2) \rtimes C_2)$ of the form
\[ \begin{split}
&\{\sum_{i=1}^4
(\alpha_{i}y^{i-1}+\alpha_{i+4}xy^{i-1})+\sum_{i=1}^2(\alpha_{i+8}x^2y^{i-1}+\alpha_{i+12}y^{i+1}z)+(\alpha_{11}x^2y^2+\alpha_{17}x^2z)(1+y) \\
&\quad +\alpha_{4}yz+\alpha_{6}x^2y^3z+\alpha_{7}xz+x^2y^2z\alpha_{8}+\alpha_{12}z+\alpha_{14}xy^2z+\alpha_{15}xyz+\alpha_{16}xy^3z\}
\end{split}  \]
\noindent and in particular the element $v_1= 1+[a+b+b^3+(a+a^2)(b^2+b^3)]c$ of this set.  The matrix $\sigma(v_1)$ is equivalent to
\[ \begin{pmatrix} I & A \end{pmatrix} \]
where
\[A=\left(\begin{smallmatrix}0&1&0&1&1&0&1&1&0&0&1&1\\
0&1&0&0&1&1&1&0&1&1&1&0\\
1&0&1&0&1&0&1&1&1&0&1&0\\
0&0&1&1&1&1&1&0&1&0&0&1\\
1&1&0&1&1&1&0&0&0&1&0&1\\
1&1&1&0&0&1&1&0&0&0&1&1\\
1&1&0&0&0&0&0&1&1&1&1&1\\
1&0&1&1&0&1&0&0&1&1&1&0\\
1&1&0&1&0&1&1&1&1&0&0&0\\
0&0&1&0&1&1&0&1&0&1&1&1\\
1&0&1&1&0&0&1&1&0&1&0&1\\
0&1&1&1&1&0&0&1&1&1&0&0\\  \end{smallmatrix}\right).
\]
It is a small computation to see that $C(v_1)$ is the $[24, 12, 8]$ code. Moreover, it can be shown that the above set contains $576$ elements that
generate the $[24, 12, 8]$ code.

\item The group $SL(2,3)$\\
Let
\[
\begin{split}v&= \sum_{i=1}^6x^{i-1}\left(\alpha_i+\alpha_{6+i}y+
\alpha_{12+i}y^2+
\alpha_{18+i}y^2x\right)
\in \FF_2 SL(2,3)
\end{split}\]
where $SL(2,3)=\langle  x,y \,|\, x^3=y^3=(xy)^2\rangle$ and $\alpha_i \in \mathbb{F}_2$. Now,
\[
\sigma(v)=
\begin{pmatrix}
A_1&A_2&A_3&A_4\\
A_5&A_6&A_7&A_8\\
A_9&A_{10}&A_{11}&A_{12}\\
A_{13}&A_{14}&A_{15}&A_{16}
\end{pmatrix},
\]
where
$A_1=circ(\alpha_1,\alpha_2,\alpha_3,\alpha_4,\alpha_5,\alpha_6)$,
$A_2=circ(\alpha_7,\alpha_8,\alpha_9,\alpha_{10},\alpha_{11},\alpha_{12})$,
\newline
$A_3=circ(\alpha_{13},\alpha_{14},\alpha_{15},\alpha_{16},\alpha_{17},\alpha_{18})$,
$A_4=circ(\alpha_{19},\alpha_{20},\alpha_{21},\alpha_{22},\alpha_{23},\alpha_{24})$,
\newline
$A_5=circ(\alpha_{16},\alpha_{22},\alpha_8,\alpha_{13},\alpha_{19},\alpha_{11})$,
$A_6=circ(\alpha_1,\alpha_{21},\alpha_{14},\alpha_4,\alpha_{24},\alpha_{17})$,
\newline
$A_7=circ(\alpha_7,\alpha_{20},\alpha_5,\alpha_{10},\alpha_{23},\alpha_2)$,
$A_8=circ(\alpha_{18},\alpha_{12},\alpha_6,\alpha_{15},\alpha_9,\alpha_3)$,
\newline
$A_9=circ(\alpha_{10},\alpha_{15},\alpha_{21},\alpha_7,\alpha_{18},\alpha_{24})$,
$A_{10}=circ(\alpha_{16},\alpha_6,\alpha_{20},\alpha_{13},\alpha_3,\alpha_{23})$,
\newline
$A_{11}=circ(\alpha_1,\alpha_{12},\alpha_{19},\alpha_4,\alpha_9,\alpha_{22})$,
$A_{12}=circ(\alpha_2,\alpha_{17},\alpha_{11},\alpha_5,\alpha_{14},\alpha_8)$,
\newline
$A_{13}=circ(\alpha_9,\alpha_{14},\alpha_{20},\alpha_{12},\alpha_{17},\alpha_{23})$,
$A_{14}=circ(\alpha_{15},\alpha_5,\alpha_{19},\alpha_{18},\alpha_2,\alpha_{22})$,
\newline
$A_{15}=circ(\alpha_6,\alpha_{11},\alpha_{24},\alpha_3,\alpha_8,\alpha_{21})$,
$A_{16}=circ(\alpha_1,\alpha_{16},\alpha_{10},\alpha_4,\alpha_{13},\alpha_7)$.\\

Now, $\sigma(v)=\sigma(v)^{T}$ if and only if $\alpha_2=\alpha_6$, $\alpha_3=\alpha_5$,
$\alpha_7=\alpha_{16}$, $\alpha_8=\alpha_{11}$, $\alpha_9=\alpha_{19}$, $\alpha_{10}=\alpha_{13}$, $\alpha_{12}=\alpha_{22}$,
$\alpha_{14}=\alpha_{24}$, $\alpha_{15}=\alpha_{18}$, $\alpha_{17}=\alpha_{21}$ and $\alpha_{20}=\alpha_{23}$. Next, consider elements of $\mathbb{F}_2 SL(2,3)$ of the form:
\[ \begin{split}
\{  &
\alpha_1+\alpha_2(x+x^5)+\alpha_3(x^2+x^4)+\alpha_4x^3+\alpha_5(y+x^3 y^2)+
\alpha_6(x y+x^4 y)+\alpha_7(x^2 y+y^2 x) \\
& +
\alpha_{8}(x^3 y+y^2)+
\alpha_{9}(x^5 y+x^3 y^2 x)+
\alpha_{10}(x y^2+x^5 y^2 x)+\alpha_{11}(x^2 y^2+x^5 y^2) \\
&+
\alpha_{12}(x^4 y^2+x^2 y^2 x)+
\alpha_{13}(x y^2 x+x^4 y^2 x) \,|\, \alpha_i \in \mathbb{F}_2 \}.
\end{split}  \]

It can be shown that it is not possible to construct the $[24,12,8]$ from any element of this set.\\

\item The group $C_2^2 \times D_6$

Let $v=$ 
\[
\sum_{i=0}^2[(\alpha_{i+1}+\alpha_{i+4}z+\alpha_{i+7}w+\alpha_{i+10}zw)+b(\alpha_{i+13}+\alpha_{i+16}z+\alpha_{i+19}w+ 
\alpha_{i+22}zw)   ]a^{i} \in {\FF}_{2}(C_2^2 \times D_6)
\]
where $\langle z,w \rangle =C_2^2$, $\langle a,b \rangle =D_6$ and $\alpha_i \in \mathbb{F}_2$. Now
\[
\sigma(v)=
\begin{pmatrix}
A&B\\B&A
\end{pmatrix}
\]

where $A=\begin{pmatrix}A_1&A_2&A_3&A_4\\A_2&A_1&A_4&A_3\\A_3&A_4&A_1&A_2\\A_4&A_3&A_2&A_1  \end{pmatrix}$, $B=\begin{pmatrix}B_1&B_2&B_3&B_4\\B_2&B_1&B_4&B_3\\B_3&B_4&B_1&B_2\\B_4&B_3&B_2&B_1  \end{pmatrix}$,
$A_1=cir(\alpha_{1},\alpha_{2},\alpha_{3})$,
$A_2=cir(\alpha_{4},\alpha_{5},\alpha_{6})$,
$A_3=cir(\alpha_{7},\alpha_{8},\alpha_{9})$,
$A_4=cir(\alpha_{10},\alpha_{11},\alpha_{12})$,
$B_1=rcir(\alpha_{13},\alpha_{14},\alpha_{15})$,
$B_2=rcir(\alpha_{16},\alpha_{17},\alpha_{18})$,
$B_3=rcir(\alpha_{19},\alpha_{20},\alpha_{21})$  and
$B_4=rcir(\alpha_{22},\alpha_{23},\alpha_{24})$.\\
Now, $\sigma(v)=\sigma(v)^{T}$ if and only if $\alpha_2=\alpha_3$, $\alpha_5=\alpha_6$,
$\alpha_8=\alpha_{9}$ and $\alpha_{11}=\alpha_{12}$. Next, consider elements of $\mathbb{F}_2 (C_2^2 \times D_6)$ of the form
\[ \begin{split}
\{&\alpha_{1}+\alpha_{3}z+\alpha_{5}w+\alpha_{7}zw+(a+a^2)(\alpha_{2}+\alpha_{4}z+\alpha_{6}w+\alpha_{8}zw)\\
&+\sum_{i=0}^2+ba^{i}(\alpha_{i+13}+\alpha_{i+16}z+\alpha_{i+19}w+\alpha_{i+22}zw)    \}.
\end{split}  \]

It can be shown that it is not possible to construct the $[24,12,8]$ Golay code  from any element of this set.

We summarize these results in the following theorem.  

\begin{thm}
The $[24, 12, 8]$ Type II code can be constructed in $\mathbb{F}_2 G$ precisely for the
following groups of order $24$: $S_4$, $D_{24}$, $C_3 \times D_8$, $C_2 \times A_4$ and $(C_6\times C_2) \rtimes C_2$.
\end{thm}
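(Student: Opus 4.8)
The plan is to combine an automorphism-group restriction with a case analysis over the finitely many groups that survive it. First I would apply Corollary~\ref{corollary:autogroup}: if the extended binary Golay code $\mathcal{G}$ can be written as $C(v)$ for some $v \in \mathbb{F}_2 G$ with $|G|=24$, then $G$ embeds in $\mathrm{Aut}(\mathcal{G})=M_{24}$. A GAP~\cite{GAP} enumeration of the order-$24$ subgroups of $M_{24}$ shows there are, up to isomorphism, exactly seven candidates: $S_4$, $D_{24}$, $C_3\times D_8$, $C_2\times A_4$, $(C_6\times C_2)\rtimes C_2$, $SL(2,3)$ and $C_2^2\times D_6$. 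Thus the theorem reduces to deciding, group by group, whether $\mathcal{G}$ is of the form $C(v)$.

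Before the case analysis I would record a convenient group-ring reformulation of self-duality. By Lemma~\ref{lemma:transpose} we have $\sigma(v)^T=\sigma(v^T)$, and since $\sigma$ is a ring homomorphism (Theorem~\ref{theorem:homomorphism}) it follows that $\sigma(v)\sigma(v)^T=\sigma(vv^T)$; injectivity of $\sigma$ then shows that $C(v)$ is self-orthogonal precisely when $vv^T=0$. Hence $C(v)=\mathcal{G}$ forces $vv^T=0$ together with $|C(v)|=2^{12}$ and minimum Hamming weight $8$, and conversely these conditions identify $C(v)$ with $\mathcal{G}$ because the $[24,12,8]$ code is unique up to equivalence.

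For the five affirmative groups I would simply exhibit a witness. The cases $S_4$ and $D_{24}$ are already in the literature (\cite{BLM} and \cite{IT}), while for $C_3\times D_8$, $C_2\times A_4$ and $(C_6\times C_2)\rtimes C_2$ I would take the explicit elements $v_1$ displayed above, form $\sigma(v_1)$ via~\eqref{equation:construction}, row-reduce to the standard shape $\begin{pmatrix} I & A\end{pmatrix}$, check $AA^T=I$ over $\mathbb{F}_2$ to confirm self-duality and dimension $12$, and verify minimum weight $8$. Each of these is a finite computation identifying $C(v_1)$ with $\mathcal{G}$.

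The substance lies in the two negative cases, $SL(2,3)$ and $C_2^2\times D_6$, where I must show that \emph{no} $v$ yields $\mathcal{G}$. Here I would impose the self-orthogonality condition $vv^T=0$---in practice specializing to the symmetric subclass $v=v^T$ of Lemma~\ref{lemma:selforthogonal} and Theorem~\ref{theorem:constructionsd}---which via Lemma~\ref{lemma:transpose} becomes an explicit linear system on the coordinates of $v$ and collapses the $24$ free parameters to the small families written out above. I would then factor out the coordinate permutations of Lemma~\ref{lemma:equivalent}, namely right multiplication of $v$ by group elements, leaving a finite list of representatives to test, and check directly that none generates a $[24,12,8]$ code. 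The main obstacle is guaranteeing the \emph{completeness} of this search: I must be certain that passing to the reduced family modulo the equivalences of Lemma~\ref{lemma:equivalent} discards no genuine witness, so that failure on the reduced list really implies failure for every $v\in\mathbb{F}_2 G$; if necessary this is secured by an exhaustive pass over all $2^{24}$ elements. Granting this, the automorphism constraint eliminates every other group of order $24$, the explicit witnesses settle five groups affirmatively, and the two non-existence arguments settle the remaining two, giving exactly the stated list.
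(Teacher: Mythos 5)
Your proposal follows essentially the same route as the paper's own proof: restrict to the seven order-$24$ subgroups of $M_{24}$ via Corollary~\ref{corollary:autogroup} and a GAP enumeration, cite \cite{BLM} and \cite{IT} for $S_4$ and $D_{24}$, exhibit the explicit elements $v_1$ for $C_3\times D_8$, $C_2\times A_4$ and $(C_6\times C_2)\rtimes C_2$, and rule out $SL(2,3)$ and $C_2^2\times D_6$ by a finite search. If anything you are more careful than the paper on the negative cases: the paper only searches the symmetric family $v=v^T$, whereas your observation that self-orthogonality is exactly $vv^T=0$ (with an exhaustive pass over all $2^{24}$ elements as a fallback) explicitly addresses the completeness question that the paper leaves implicit.
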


\end{itemize}

\section{The Dihedral Group}

Let $D_{2k}$ be the dihedral group.  We describe the group by $D_{2k} = \langle a,b \ | \ a^2 =  b^k =1, ab=b^{-1}a \rangle.$  
The ordering of the elements for the map $\sigma$ is 
$1,b,b^2,\dots, b^{k-1}, a, ab,ab^2,\dots, ab^{k-1}.$  It is this group that McLoughlin used in \cite{McLoughlin} to give a construction of the binary $[48,24,12]$ extremal Type~II code.

Let $v = \sum \alpha_{a^i,b^j} a^i b^j$.  In this case, the matrix  $\sigma(v)$ is of the form:
\begin{equation}
\left(
\begin{array}{cccccccccc}
\alpha_{1} & \alpha_{b} & \alpha_{b^2} & \dots & \alpha_{b^{k-1}} & \alpha_{a} & \alpha_{ab} & \alpha_{ab^2} & \dots & \alpha_{ab^{k-1}} \\
\alpha_{b^{k-1}} & \alpha_{1} & \alpha_{b} & \dots  & \alpha_{b^{k-2}} & \alpha_{ab} & \alpha_{ab^2} & \alpha_{ab^3} & \dots & \alpha_{a} \\
\vdots & \vdots & \vdots & \vdots & \vdots & \vdots & \vdots & \vdots & \vdots & \vdots \\
\alpha_{b} & \alpha_{b^2} & \alpha_{b^3} & \dots & \alpha_{1} & \alpha_{ab^{k-1}} & \alpha_{a} & \alpha_{ab} & \dots & \alpha_{ab^{k-2}} \\
\alpha_{a} & \alpha_{ab} & \alpha_{ab^2} & \dots & \alpha_{ab^{k-1}} & \alpha_{1} & \alpha_{b} & \alpha_{b^2} & \dots & \alpha_{b^{k-1}} \\
\alpha_{ab} & \alpha_{ab^2} & \alpha_{ab^3} & \dots & \alpha_{a} & \alpha_{b^{k-1}} & \alpha_{1} & \alpha_{b} & \dots & \alpha_{b^{k-2}} \\
\vdots & \vdots & \vdots & \vdots & \vdots & \vdots & \vdots & \vdots & \vdots & \vdots \\
\alpha_{ab^{k-1}} & \alpha_{a} & \alpha_{ab} & \dots & \alpha_{ab^{k-2}} & \alpha_{b} & \alpha_{b^2} & \alpha_{b^3} & \dots & \alpha_{1} \\
\end{array} 
\right).
\end{equation}

This gives that $\sigma(v)$ is of the form:
$$
\left( \begin{array}{cc}
A & B  \\ 
B&A  \\ 
\end{array}
\right)
$$
where $A$ is a circulant matrix and $B$ is a reverse circulant matrix.

We begin by   proving  a lemma.

\begin{lem}\label{lemma:char2}
 Let $R$ be  a finite commutative Frobenius ring of characteristic 2.  Let $C$ be the code generated by a matrix $M$ of the form
$$
\left( \begin{array}{cc}
 I_k & B \\
B & I_k  \\
\end{array}
\right),
$$
where $B$ is a symmetric $k$ by $k$ matrix.  If the free rank of $C$ is $k$ then $C$ is self-dual.
\end{lem}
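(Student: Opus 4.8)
The plan is to reduce the statement to the self-duality criterion already available in Theorem~\ref{theorem:constructionsd}: I will show that the free-rank hypothesis forces $B^2=I_k$, which is exactly the condition making the generator matrix self-annihilating, and then finish with the cardinality count afforded by the Frobenius property. Throughout, write $B_i$ for the $i$-th row of $B$ and $e_i$ for the $i$-th standard basis row vector, so that the rows of $M$ are $(e_i\mid B_i)$ and $(B_i\mid e_i)$ for $1\le i\le k$.

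First I would record the governing algebraic identity. Since $B$ is symmetric, $M$ is symmetric, so $MM^T=M^2$, and a block multiplication in characteristic $2$ gives
\[
M^2=\begin{pmatrix} I_k+B^2 & 2B \\ 2B & I_k+B^2\end{pmatrix}=\begin{pmatrix} I_k+B^2 & 0 \\ 0 & I_k+B^2\end{pmatrix}.
\]
The $(i,j)$ entry of $MM^T$ is the inner product of rows $i$ and $j$ of $M$, so $C$ is self-orthogonal exactly when $M^2=0$, i.e. exactly when $B^2=I_k$. Hence the whole lemma comes down to extracting $B^2=I_k$ from the hypothesis on the free rank.

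Next I would use the free-rank hypothesis to pin down $|C|$. The top $k$ rows $(I_k\mid B)$ generate a submodule $F\subseteq C$: the map $x\mapsto x(I_k\mid B)=(x\mid xB)$ is injective because of the leading $I_k$ block, so $F$ is free of rank $k$ and $|F|=|R|^k$. Reading the hypothesis as the assertion that $C$ is free of rank $k$, we obtain $|C|=|R|^k=|F|$; since $R$ is finite and $F\subseteq C$, this forces $C=F$. In particular each bottom row $(B_i\mid e_i)$ lies in $F$, so $(B_i\mid e_i)=(x\mid xB)$ for some $x\in R^k$; comparing the two blocks gives $x=B_i$ and then $B_iB=e_i$ for every $i$, which is precisely $B^2=I_k$.

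Finally I would assemble the pieces. With $B^2=I_k$ the identity above gives $MM^T=0$, so $C\subseteq C^\perp$ (this is the matrix form of Lemma~\ref{lemma:selforthogonal}); combined with $|C|=|R|^k$ it places us in the situation of Theorem~\ref{theorem:constructionsd}. Explicitly, since $R$ is Frobenius we have $|C|\,|C^\perp|=|R|^{2k}$, so $|C^\perp|=|R|^k=|C|$, and with $C\subseteq C^\perp$ this yields $C=C^\perp$. The one delicate point, and the step I expect to be the real obstacle, is the passage from the free-rank hypothesis to $B^2=I_k$: over a field this is automatic, since free rank is just dimension and the lower rows of $M$ are then forced to be combinations of the upper ones, but over a general commutative Frobenius ring one must ensure the hypothesis genuinely delivers the size equality $|C|=|R|^k$ (equivalently that $C$ is a free module of rank $k$, and not merely that its free part has rank $k$), because it is exactly this equality that collapses $I_k+B^2$ to $0$. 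The block computation and the concluding count are otherwise routine.
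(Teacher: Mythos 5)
Your proof is correct, and it takes a genuinely different route from the paper's. The paper never extracts $B^2=I_k$: it sets $D=\langle (I_k\mid B)\rangle$ and $D'=\langle (B\mid I_k)\rangle$, checks directly that each row of the first is orthogonal to each row of the second (the inner product of the $i$-th row of $(I_k\mid B)$ with the $j$-th row of $(B\mid I_k)$ is $B_{i,j}+B_{j,i}=0$ by symmetry and characteristic $2$), concludes $D'=D^{\perp}$ from the count $|D||D'|=|R|^{2k}$, and only then spends the rank hypothesis to force $C=D=D^{\perp}$. Note that this cross-orthogonality holds unconditionally, whereas the rows of $(I_k\mid B)$ by themselves need not be pairwise orthogonal a priori. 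You spend the rank hypothesis first, collapsing $C$ onto $F=\langle (I_k\mid B)\rangle$, then extract $B^2=I_k$ from membership of the bottom rows in $F$, and finish with the self-orthogonality-plus-cardinality pattern of Lemma~\ref{lemma:selforthogonal} and Theorem~\ref{theorem:constructionsd}. Your route buys the explicit structural identity $B^2=I_k$ (equivalently $MM^T=M^2=0$), which makes transparent how the rank hypothesis substitutes for the condition $v^2=0$ that the paper is trying to eliminate; the paper's route buys the unconditional identification of $\langle(B\mid I_k)\rangle$ as the dual of $\langle(I_k\mid B)\rangle$, isolating the single point where the hypothesis is used. Finally, the delicate point you flag at the end is genuine, and it afflicts the paper's own proof equally: both arguments require reading ``free rank $k$'' as ``$C$ is a free module of rank $k$,'' i.e.\ $|C|=|R|^{k}$. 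Under the weaker standard reading (rank of a maximal free direct summand) the lemma is false: over $R_1$ take $B=\left(\begin{smallmatrix}u&1\\1&0\end{smallmatrix}\right)$, so that $B^2=I_2+u\left(\begin{smallmatrix}0&1\\1&0\end{smallmatrix}\right)\neq I_2$; then $C=D+D^{\perp}\cong R_1^{2}\oplus\FF_2^{2}$ has free rank $2$, but $|C|=64\neq|R_1|^{2}$ and $C$ is not self-dual. So your caveat is not pedantry; it is exactly the hypothesis both proofs need.
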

\begin{proof}
Let $D = \langle (I_k | B) \rangle $ and $D'= \langle (B | I_k) \rangle.$   
 The inner-product of  the $i$-th row of $ (I_k | B)$ and the $j$-th row of $(B|I_k)$ is $B_{i,j}  + B_{j,i} = 0$ since $B_{i,j} = B_{j,i}$ and the characteristic is 2. Therefore $D' = D^\perp$ since $|D||D'| = |R|^n.$
 
 The code $C = \langle D, D^\perp \rangle.$  If $D \neq D^\perp$ then $|C| > |D|.$  However, we are assuming that the  free rank of $C$ is $k$.  Hence $C= D = D^\perp.$  This gives that $C$ is a self-dual code.    
\end{proof}

In \cite{Hurley1}, Hurley  proves that $C_v$ is self-dual over $\FF_2$ if $v \in \FF_2D_{24}$,  $v^2=0$ and the dimension is $\frac{n}{2}$.  We can expand this by showing the following which eliminates the need for $v$ to satisfy $v^2=0.$  

\begin{thm}\label{theorem:cyclicconstruction2}
Let $R$ be a finite commutative Frobenius ring of characteristic 2  and let $v \in RD_n$ with $v = \sum \alpha_i h_i$ where only one 
$\alpha_{a^0 b^i}$ is 1 and the rest are 0. 
  If $C_v$ has free  rank $k$, then $C_v$ is  a self-dual code.  
\end{thm}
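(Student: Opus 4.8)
The plan is to reduce the statement to Lemma~\ref{lemma:char2} by a single invertible row operation that converts $\sigma(v)$ into the shape $\begin{pmatrix} I_k & B' \\ B' & I_k \end{pmatrix}$ with $B'$ symmetric, after which the hypothesis on the free rank finishes the argument. First I would record the shape of the generator matrix. By the discussion preceding the theorem, $\sigma(v) = \begin{pmatrix} A & B \\ B & A \end{pmatrix}$, where $A$ is the circulant matrix whose first row lists the rotation coefficients $\alpha_{b^i}$ and $B$ is the reverse circulant matrix whose first row lists the reflection coefficients $\alpha_{ab^j}$; both blocks are $k$ by $k$, and $B = B^T$ since reverse circulant matrices are symmetric. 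The hypothesis that exactly one of the coefficients $\alpha_{a^0 b^i}$ equals $1$ and the rest are $0$ says precisely that $A = P^{i_0}$ is a power of the cyclic shift permutation matrix $P$. In particular $A$ is a permutation matrix, hence invertible over $R$ with $A^{-1} = A^T$ again a $0$--$1$ matrix.

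Next I would left-multiply $\sigma(v)$ by the block-diagonal matrix $\mathrm{diag}(A^{-1}, A^{-1})$. Since this matrix is invertible over $R$, it induces an honest row operation and leaves the row space unchanged; thus $\begin{pmatrix} I_k & A^{-1}B \\ A^{-1}B & I_k \end{pmatrix}$ also generates $C_v$, and in particular $C_v$ still has free rank $k$.

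The crux is to check that the new off-diagonal block $A^{-1}B$ is symmetric, and this rests on the identity $ABA = B$. I would verify it by a short index computation: writing $A_{i,j} = [\,j \equiv i + i_0\,]$ and $B_{i,j} = \beta_{i+j}$ with indices read modulo $k$, one finds $(ABA)_{i,j} = \beta_{i+j} = B_{i,j}$. Equivalently this is the relation $PBP = B$ for a reverse circulant $B$, which is nothing but the dihedral relation $ab = b^{-1}a$ surfacing at the level of matrices. From $ABA = B$ we obtain $A^{-1}B = BA$, and then $(A^{-1}B)^T = B^T (A^{-1})^T = BA = A^{-1}B$, using $B^T = B$ and $(A^{-1})^T = A$ for the permutation matrix $A$. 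Hence $A^{-1}B$ is symmetric.

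Finally I would apply Lemma~\ref{lemma:char2} with $B' = A^{-1}B$: the ring has characteristic $2$, the block $B'$ is symmetric, and $C_v$ has free rank $k$, so $C_v$ is self-dual. The one genuinely non-formal step is the symmetry of $A^{-1}B$, i.e.\ the identity $ABA = B$; everything else is bookkeeping. I would also emphasize that the single-nonzero-rotation hypothesis is exactly what guarantees that $A$ is a permutation matrix, hence invertible, and it is this invertibility---rather than the condition $v^2 = 0$ used in \cite{Hurley1}---that drives the reduction.
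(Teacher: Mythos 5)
Your proof is correct and takes essentially the same route as the paper: both reduce to Lemma~\ref{lemma:char2} by using the fact that the single nonzero rotation coefficient makes the circulant block $A$ a permutation matrix, so that a row permutation turns $\sigma(v)$ into $\begin{pmatrix} I_k & B' \\ B' & I_k \end{pmatrix}$ with $B'$ symmetric. The only cosmetic difference is that you certify the symmetry of $B' = A^{-1}B$ through the identity $ABA = B$, whereas the paper observes directly that this block is again reverse circulant and hence symmetric.
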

\begin{proof}
Since only one $\alpha_{2i}$ is 1 and the rest are 0, the generator matrix of $C_v$  is permutation equivalent to a matrix of the form:
$$
\left( \begin{array}{cc} I_k& B  \\ 
B&I_k  \\ 
\end{array}
\right)
$$
where $B$ is a reverse circulant matrix and hence symmetric.
Then by Lemma~\ref{lemma:char2} we have the result.
\end{proof} 

To show the importance of the strengthening of this result, consider the element $v=1 +ab \in \FF_2D_{2k}$ where $k$ is greater than 2.  Then $(1e_{D_{2k}} +ab)^2 \neq 0$ but $C_v$ is a self-dual code. We continue with a larger example.

\begin{exam}
Consider $v \in \FF_2D_{48}$ such that $dim (C_v) = 24$ and the minimum distance of $C_v$ is 10.  There are  192 elements $v$ which produce  equivalent self-dual codes using the technique.  For more information about the importance of this result, see \cite{Open}.
\end{exam}

A common technique for producing self-dual codes is to generate a code with the matrix $(I_{\frac{n}{2}} |  A)$ where $A$ is a reverse circulant matrix.  
Given a code $C$ generated by this matrix we have that $C^\perp$ is generated by $(A^T |  I_{\frac{n}{2}})$ which is equal to $ (A |  I_{\frac{n}{2}}) $ since $A$ is symmetric.  If $C$ is a self-dual code then $\langle (A |  I_{\frac{n}{2}})  \rangle \subseteq \langle  (I_{\frac{n}{2}} |  A) \rangle $. This means that the code generated by 
$\left( \begin{array}{cc} I_{\frac{n}{2}}  & A \\ A & I_{\frac{n}{2}}  \end{array} \right)$ is the code $C$. Consider the first row of this matrix.  Reading this as an element $v \in \FF_2D_{2k}$ we have that $C = C(v).$  This gives the following.

\begin{thm}\label{theorem:putative}
Let $C$ be a binary self-dual code generated by $(I_{\frac{n}{2}} | A)$ where $A$ is a reverse circulant matrix then $C = C(v)$ for some $v \in \FF_2D_{2k}.$ 
\end{thm}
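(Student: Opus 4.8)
The plan is to follow the computation sketched in the paragraph preceding the statement and then package the resulting matrix as an image under $\sigma$. Write $n = 2k$, so the claim concerns $v \in \FF_2 D_{2k}$. Since $A$ is reverse circulant it is symmetric, so $A^T = A$, and the orthogonal of $C = \langle (I_k \mid A) \rangle$ is $\langle (A^T \mid I_k) \rangle = \langle (A \mid I_k) \rangle$. Self-duality of $C$ then forces $\langle (A \mid I_k) \rangle = C$, so the rows of $(A \mid I_k)$ already lie in $C$, and adjoining them to the rows of $(I_k \mid A)$ leaves the row space unchanged. Hence the $n \times n$ matrix
$$
M = \left( \begin{array}{cc} I_k & A \\ A & I_k \end{array} \right)
$$
generates exactly $C$.

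Next I would realize $M$ as $\sigma(v)$ for an explicit $v$. Recall the block description obtained earlier in this section: for the ordering $1, b, \dots, b^{k-1}, a, ab, \dots, ab^{k-1}$, the matrix $\sigma(v)$ has the form $\left(\begin{smallmatrix} A' & B' \\ B' & A' \end{smallmatrix}\right)$, where $A'$ is the circulant matrix with first row $(\alpha_1, \alpha_b, \dots, \alpha_{b^{k-1}})$ and $B'$ is the reverse circulant matrix with first row $(\alpha_a, \alpha_{ab}, \dots, \alpha_{ab^{k-1}})$. I would therefore take $v$ to be the group ring element whose coefficient vector is the first row of $M$: set $\alpha_1 = 1$, $\alpha_{b^i} = 0$ for $1 \le i \le k-1$, and let $(\alpha_a, \alpha_{ab}, \dots, \alpha_{ab^{k-1}})$ be the first row of $A$. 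With this choice the circulant block $A'$ becomes $I_k$ and the reverse circulant block $B'$ becomes $A$, so that $\sigma(v) = M$. Combining with the previous step yields $C(v) = \langle \sigma(v) \rangle = \langle M \rangle = C$, as required.

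The argument is almost entirely bookkeeping, and there is no serious obstacle; the only step requiring genuine care is matching the explicit dihedral ordering to the block structure, i.e.\ verifying that the chosen coefficients really do produce $I_k$ in the circulant position and $A$ in both reverse circulant positions. Here the symmetry of the reverse circulant is precisely what makes the two off-diagonal blocks coincide with $A$ itself rather than with its transpose, so that $M$ is genuinely of the form taken by images of $\sigma$ on $\FF_2 D_{2k}$.
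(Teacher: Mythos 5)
Your proposal is correct and follows essentially the same route as the paper: the paper's own argument (given in the paragraph preceding the theorem) uses the symmetry of the reverse circulant $A$ to identify $C^\perp$ with $\langle (A \mid I_{\frac{n}{2}}) \rangle$, concludes from self-duality that $\left( \begin{smallmatrix} I & A \\ A & I \end{smallmatrix} \right)$ generates $C$, and then reads the first row of that matrix as the element $v \in \FF_2 D_{2k}$, exactly as you do. Your write-up simply makes explicit the bookkeeping (the circulant/reverse circulant block form of $\sigma$ on the dihedral group) that the paper leaves implicit.
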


Applying Corollary~\ref{corollary:72}, we have the following.

\begin{cor}
The putative $[72,36,16]$ Type~II code cannot be produced by $(I_{\frac{n}{2}} | A)$ where $A$ is a reverse circulant matrix.
\end{cor}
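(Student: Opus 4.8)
The plan is to derive the statement as a short contradiction argument that simply combines Theorem~\ref{theorem:putative} with Corollary~\ref{corollary:72}; the corollary carries no new mathematical content beyond matching up the parameters. Throughout one sets $n = 72$, so that $\frac{n}{2} = 36$ and the relevant dihedral group is $D_{72}$ (that is, $2k = 72$ with $k = 36$).

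First I would assume, toward a contradiction, that the putative $[72,36,16]$ Type~II code $C$ is generated by a matrix of the form $(I_{36} | A)$ in which $A$ is a reverse circulant $36 \times 36$ matrix over $\FF_2$. Since $C$ is self-dual, this is exactly the hypothesis of Theorem~\ref{theorem:putative}. Applying that theorem then produces an element $v \in \FF_2 D_{72}$, obtained by reading off the first row of $\left( \begin{array}{cc} I_{36} & A \\ A & I_{36} \end{array} \right)$ as a group ring element, for which $C = C(v)$.

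Next I would invoke Corollary~\ref{corollary:72}, which asserts that the putative $[72,36,16]$ code cannot equal $C(v)$ for any $v \in \FF_2 G$ and any group $G$. Taking $G = D_{72}$ contradicts the conclusion of the previous step, so the assumption that $C$ can be generated by $(I_{36} | A)$ must fail.

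The only point requiring any care is bookkeeping rather than mathematics: one checks that the parameters match ($\frac{n}{2} = 36$ forces the group to have order $72$, namely $D_{72}$), and that self-duality is precisely what licenses the application of Theorem~\ref{theorem:putative}. I expect no substantive obstacle here, since the real work---that no group of order $72$ can embed in the automorphism group of the putative code---has already been carried out in the results leading up to Corollary~\ref{corollary:72}.
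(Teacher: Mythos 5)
Your proposal is correct and takes essentially the same approach as the paper: the paper's own proof likewise combines Corollary~\ref{corollary:72} with Theorem~\ref{theorem:putative}, noting that a self-dual code generated by $(I_{\frac{n}{2}} | A)$ with $A$ reverse circulant would be of the form $C(v)$ for some $v \in \FF_2 D_{2k}$, contradicting the fact that the putative $[72,36,16]$ code cannot arise from any group ring element. Your version merely makes the parameter bookkeeping ($n=72$, $G=D_{72}$) and the contradiction structure explicit, which the paper leaves implicit.
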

\begin{proof}
Corollary~\label{corollary:72} gives that the  $[72,36,16]$ Type~II code is not formed from an element in a group algebra and so by Theorem~\ref{theorem:putative} gives the result.  
\end{proof}

This corollary eliminates a commonly used technique in the attempt to construct this putative code. This give a reason why these attempts have not been successful. 

\section{The Cyclic Group cross the Dihedral Group}

In this section, we shall use the group $G= C_s \times D_{2k}$.  Let $C_s = \langle h \rangle$ and let 
$D_{2k} = \langle a,b \ | \ a^2 = b^k =1, ab=b^{-1}a \rangle.$ 
We shall order the elements as follows:
\begin{eqnarray*} 
&\{ (1,1),(1,b),\dots,(1,b^{k-1}), (h,1),(h,b),\dots,(h,b^{k-1}), \dots, (h^{s-1} ,1), \\ &(h^{s-1},b), \dots,(h^{s-1},b^{k-1}),
  (1,ab),\dots,(1,ab^{k-1}), (h,1),(h,ab),  
  \dots,(h,ab^{k-1}),  \\&\dots, (h^{s-1} ,1),(h^{s-1},ab),\dots,(h^{s-1},ab^{k-1}) \}.
  \end{eqnarray*}
  
We see that 
if we choose $v \in RG$ such that only 1 of $\alpha_{(h^i,a^0b^j)}$ is 1 and the rest are 0. Then we get a matrix $\sigma(v)$ of the form:
$$
\left(
\begin{array}{cc} I_k & B \\
B & I_k
\end{array}
\right),
$$  
where $B$ is of the following form:
$$B = 
\left(
\begin{array}{ccccc}
1 A & h A & h^2 A & \dots & h^{s-1} A \\
h^{s-1} A & 1 A & h A &  \dots & h^{s-2} A \\
\vdots & \vdots & \vdots & \vdots & \vdots \\
hA & h^2A &h^3 A & \dots & 1A\\
\end{array}
\right)
$$
where $h^kA$ indicates the matrix where the $i,j$-th element is $(h^k,A_{i,j})$ and $A$ is a reverse circulant matrix. 

\begin{thm}\label{theorem:cyclicconstruction2}
Let $R$ be a Frobenius ring and let $v \in RC_sD_{2k}$ with $v = \sum \alpha_i h_i$ 
where only 1 of $\alpha_{(h^i,)a^0b^j}$ is 1 and the rest are 0.
 Let  $R$ be a finite commutative  Frobenius ring of characteristic 2.  If $|C_v| = |R|^{\frac{n}{2}}$, then $C_v$ is isodual and hence formally self-dual with respect to any weight enumerator.  
 \end{thm}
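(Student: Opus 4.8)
The plan is to mirror the proof of Lemma~\ref{lemma:char2}, but to account for the fact that the off-diagonal block $B$ displayed just before the theorem is no longer symmetric. Instead, $B$ is block-circulant in the cyclic coordinate (the powers of $h$), with each block a reverse circulant, and hence symmetric, matrix. So whereas a genuinely symmetric $B$ would yield a self-dual code, here $B$ will agree with $B^{T}$ only up to a coordinate permutation, and this is exactly what produces an isodual, rather than self-dual, code. The goal is therefore to write down an explicit coordinate permutation carrying $C_v$ onto $C_v^{\perp}$.

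First I would exploit the size hypothesis. Writing $n=2ks$ and setting $D=\langle (I_{ks}\mid B)\rangle$, the rows of $(I_{ks}\mid B)$ are free because of the identity block, so $|D|=|R|^{ks}$. Since $D\subseteq C_v$ and $|C_v|=|R|^{ks}$ by assumption, we obtain $C_v=D=\langle (I_{ks}\mid B)\rangle$. This reduction is essential: the extra generator $(B\mid I_{ks})$ is \emph{not} orthogonal to $(B^{T}\mid I_{ks})$ in general (their product is $B^{2}+I_{ks}$), so only after discarding it can the dual be pinned down cleanly. A direct check using $\mathrm{char}\,R=2$ gives $(I_{ks}\mid B)\left(\begin{smallmatrix}B\\ I_{ks}\end{smallmatrix}\right)=B+B=0$, so $\langle (B^{T}\mid I_{ks})\rangle$ is orthogonal to $C_v$; as it is also free of size $|R|^{ks}$, the MacWilliams identity $|C_v||C_v^{\perp}|=|R|^{n}$ forces $C_v^{\perp}=\langle (B^{T}\mid I_{ks})\rangle$.

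Next I would produce the permutation. Let $P$ be the permutation matrix that reverses the cyclic coordinate, acting on the block index by $h^{i}\mapsto h^{-i}$ and as the identity inside each block; it is an involution, $P^{2}=I_{ks}$ and $P=P^{T}$. Because $B$ is block-circulant with symmetric blocks, a block-by-block comparison gives the key identity
\[
B^{T}=P\,B\,P .
\]
Applying the coordinate permutation $P_{\pi}=\left(\begin{smallmatrix}0&P\\ P&0\end{smallmatrix}\right)$ to the generator $(I_{ks}\mid B)$ of $C_v$ sends it to $(BP\mid P)$; left-multiplying this generator by the invertible matrix $P$ leaves the row space unchanged and yields $(PBP\mid I_{ks})=(B^{T}\mid I_{ks})$. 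Hence the permuted code equals $\langle (B^{T}\mid I_{ks})\rangle=C_v^{\perp}$, so $C_v$ and $C_v^{\perp}$ are permutation equivalent and $C_v$ is isodual.

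I expect the identity $B^{T}=P\,B\,P$ to be the main obstacle: it is the precise bookkeeping that replaces the symmetry of $B$ used in Lemma~\ref{lemma:char2}, and it relies on both the circulant arrangement of the blocks in the cyclic direction and the symmetry of each reverse circulant block. Once it is in hand the conclusion is immediate: a coordinate permutation does not change the multiset of entries of any codeword, so it preserves the complete weight enumerator of Equation~\eqref{equation:cwe} and therefore every weight enumerator obtained from it by specialization. Thus $C_v$ and $C_v^{\perp}$ have identical weight enumerators, and $C_v$ is formally self-dual with respect to any weight enumerator.
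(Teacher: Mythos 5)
Your proposal is correct and follows essentially the same route as the paper's proof: identify $C_v$ with $\langle (I_{ks}\mid B)\rangle$ via the size hypothesis, identify the dual with $\langle (B^T\mid I_{ks})\rangle$, and conclude by showing $B$ and $B^T$ are permutation equivalent. The only difference is that the paper merely asserts "$B$ is equivalent to $B^T$," whereas you supply the explicit block-reversal permutation $P$ and verify the identity $B^T = PBP$, which fills in the detail the paper leaves implicit.
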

\begin{proof}
We have that code $C(v)$ is generated by $(I_k | B)$ and then its orthogonal is generated by $(B^T | I_k).$  Then
we have that $B$ is equivalent to $B^T$.   Therefore $C(v)$ and $C(v)^\perp$ are equivalent and therefore formally self-dual with respect to any weight enumerator.
\end{proof}

Note that if $R$ is a finite field, then the condition in the previous theorem becomes that $dim (C_v) = \frac{n}{2}.$

\begin{exam}  Let G be the group $C_3D_8$. There are exactly $2^{12} = 4096$ elements in $\mathbb{F}_2 G$ with
the property that $\alpha_{(h^i,a^0b^j)}$ is equal to $1$ when $i=j=0$ and equal to $0$ otherwise.
 Of these $256$ have $dim(C_v) = 12$
and $192$ of these codes are formally self-dual but not self-dual and $64$ are self-dual. Of the
$192$ formally self-dual codes $80$ have minimum distance $6$ which is optimal for Type I codes.
As an example, if $v_1 = 1+a(b+b(1+b)(bh+h^2))$
then $C_{v_1}$ is a formally self-dual code
with minimum distance $6$. The remaining $112$ formally self-dual codes have have minimum distance $4$ and $C_{v_2}$ is an example
of such a code where $v_2=1+a(b^2+h+b^3h+h^2+bh^2)$.
\end{exam}

\begin{exam} Let G be the group $C_4D_8$ and consider elements of $\mathbb{F}_2 G$ with
the property that $\alpha_{(h^i,a^0b^j)}$ is equal to $1$ when $i=j=0$ and equal to $0$ otherwise. Of these elements, there are $2048$ that  have $dim(C_v) = 16$, of these $512$  are self-dual and the remaining $1536$ are formally self-dual. Let
$v_1=1+a(\hat{b}+h)h$, $v_2=1+a(b+b^3+h+h^3+(b^2+\hat{b})h^2+(1+\hat{b})h^3)$ and $v_3=1+a(b(1+h)+ \hat{b}h^2+(b+\hat{b})h^3)$.  The code $C_{v_1}$ is an example of a 
formally self-dual with minimum distance $4$, the code $C_{v_2}$ is an example of a
formally self-dual with minimum distance $6$ and the code $C_{v_3}$ is an example of a
formally self-dual with minimum distance $8$. Of the $1536$ formally self-dual codes, there are $896$ with minimum distance $4$, $192$ with minimum distance $6$ and $448$ with minimum distance $8$.
\end{exam}

\section{Cyclic Case} 

In this section, we shall set $G =C_n$ the cyclic group of order $n$.
Since the inception of cyclic codes, it has been an open question to determine which cyclic codes were self-dual.  We shall describe when this occurs. 

We focus  on the case when $n=2k$.  Let $G = \langle h \rangle.$ 
Then let $h_i = h^i$.  We then use as the ordering of the elements of $G$:
$$(h_0,h_2,\dots,h_{2k}, h_1,h_3,\dots,h_{2k-1}).$$
That is $g_i = h_{2(i-1)}$ for $i=1$ to $k$ and 
$g_{k+j} = h_{2(j-1)+1}$ for $j=1$ to $k$.

It follows that the form of $\sigma(v)$ is:

\[
\begin{pmatrix}
\alpha_{h_{0}} & \alpha_{h_{2}} & \cdots & \alpha_{h_{2k}} & \alpha_{h_{1}} & \alpha_{h_{3}} & \cdots & \alpha_{h_{2k-1}} \\
\alpha_{h_{2k}} & \alpha_{h_{0}} & \cdots & \alpha_{h_{2k-2}} & \alpha_{h_{2k-1}} & \alpha_{h_{1}} & \cdots & \alpha_{h_{2k-3}} \\
\vdots & \vdots & \ddots & \vdots & \vdots & \vdots & \ddots & \vdots\\ 
\alpha_{h_{4}} & \alpha_{h_{6}} & \cdots & \alpha_{h_{2}} & \alpha_{h_{3}} & \alpha_{h_{5}} & \cdots & \alpha_{h_{1}} \\
\alpha_{h_{2k-1}} & \alpha_{h_{1}} & \cdots & \alpha_{h_{2k-3}} & \alpha_{h_{0}} & \alpha_{h_{2}} & \cdots & \alpha_{h_{2k}} \\
\alpha_{h_{2k-3}} & \alpha_{h_{2k-1}} & \cdots & \alpha_{h_{2k-5}} & \alpha_{h_{2k}} & \alpha_{h_{0}} & \cdots & \alpha_{h_{2k-2}} \\
\vdots & \vdots & \ddots & \vdots & \vdots & \vdots & \ddots & \vdots\\
\alpha_{h_{1}} & \alpha_{h_{3}} & \cdots & \alpha_{h_{2k-1}} & \alpha_{h_{4}} & \alpha_{h_{6}} & \cdots & \alpha_{h_{2}} 
\end{pmatrix}.
\]

Hence $\sigma(v)$ is of the form
\[ \begin{pmatrix}A&B\\D&A  \end{pmatrix}\]

\noindent where $A$, $B$ and $D$ are circulant matrices.

Choose an element of $v$ such that $v = \sum \alpha_i h_i$ where only one of $\alpha_{2i} =1$ and the rest of $\alpha_{2i}$ are 0.  Then the generating matrix is permutation equivalent to a matrix where  $A$ is $I_k$ and $B$ and $D$ are  circulant matrices.  Namely, we get a matrix of the form 
$$\left( \begin{array}{cc}
I_{\frac{n}{2}} & B \\
D & I_{\frac{n}{2}}
\end{array} \right).$$

\begin{thm}\label{theorem:cyclicconstruction}
Let $R$ be a Frobenius ring of characteristic 2 and let $v\in RC_n$ with $v = \sum \alpha_i h_i$ where only one $\alpha_{2i} =1$ and the rest of $\alpha_{2i}$ are 0. If $v_{2k-i} = v_i$ for odd $i$ and $|C| = |R|^k$ then $C(v)$ is a self-dual code. 
\end{thm}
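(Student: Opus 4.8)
The plan is to obtain self-duality from self-orthogonality together with the size hypothesis, so the real goal is to prove $vv^T=0$. First I would record the identity $\sigma(v)\sigma(v)^T=\sigma(v)\sigma(v^T)=\sigma(vv^T)$, which follows from Lemma~\ref{lemma:transpose} and Theorem~\ref{theorem:homomorphism}; since $\sigma$ is injective, the rows of $\sigma(v)$ are pairwise orthogonal exactly when $vv^T=0$, just as in Lemma~\ref{lemma:selforthogonal}. Once $vv^T=0$ is known, $C_v$ is self-orthogonal, and because $|C_v|=|R|^{k}=|R|^{n/2}$ forces $|C_v^\perp|=|C_v|$, we get $C_v=C_v^\perp$ exactly as in Theorem~\ref{theorem:constructionsd}. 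Everything therefore reduces to showing that the single nonzero even coefficient, the odd symmetry $v_{2k-i}=v_i$, and $|C_v|=|R|^k$ together force $vv^T=0$.

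Write $v=h^{2j_0}+f$, where $h^{2j_0}$ is the unique even term and $f$ collects the odd terms; the hypothesis $v_{2k-i}=v_i$ for odd $i$ says exactly $f^T=f$. Then $v^T=h^{-2j_0}+f$, and a direct expansion gives $vv^T=1+f^2+(h^{2j_0}+h^{-2j_0})f$, whose even-supported and odd-supported parts must vanish independently. The heart of the matter is to show the even part is central, i.e. $h^{2j_0}=h^{-2j_0}$, for then the odd term disappears and only $1+f^2$ survives. To reach this I would read off the circulant blocks $\sigma(v)=\left(\begin{smallmatrix}A&B\\D&A\end{smallmatrix}\right)$ and identify them with elements of the circulant algebra $R[t]/(t^k-1)$ via $A\leftrightarrow t^{j_0}$ and $B\leftrightarrow\beta(t)$; here the symmetry $f^T=f$ becomes the palindromic relation $\beta(t^{-1})=t\,\beta(t)$, which is also what makes $D=B^T$.

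Since $A$ is a permutation matrix it is invertible, so the top rows $(A\,|\,B)$ span a free rank-$k$ submodule; as $|C_v|=|R|^k$ they must span all of $C_v$, and the bottom rows $(B^T\,|\,A)$ are then $R$-combinations of them. Writing out this dependence yields $BB^T=A^2$, i.e. $\beta(t)\beta(t^{-1})=t^{2j_0}$; combined with the palindrome relation this gives $\beta(t)^2=t^{2j_0-1}$, and applying $t\mapsto t^{-1}$ and comparing exponents forces $4j_0\equiv0\pmod k$. For $k$ odd this gives $j_0\equiv0$, so $A=I$ and the even part is trivial, while for $k$ even the identity $\beta(t)^2=t^{2j_0-1}$ is unsolvable and the hypotheses cannot be met. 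With $A=I$ the relation becomes $BB^T=I$, forcing $f^2=1$, so that $v=v^T$ and $v^2=0$, whence $C_v$ is self-dual by Theorem~\ref{theorem:constructionsd}. I expect this last passage — extracting the rigidity $A^2=I$ from the rank condition — to be the main obstacle, since self-orthogonality is genuinely not automatic from the block shape, and only the interplay of the free-rank hypothesis with the palindromic structure over a characteristic-$2$ ring pins down the admissible $v$.
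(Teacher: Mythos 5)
Your proof is correct, but it follows a genuinely different route from the paper's. The paper stays entirely at the level of generator matrices: since the top rows $(I_k\,|\,B)$ span a free module of cardinality $|R|^k=|C_v|$, they span all of $C_v$; in characteristic $2$ the dual of $\langle (I_k\,|\,B)\rangle$ is $\langle (B^T\,|\,I_k)\rangle$, which is exactly the span of the bottom rows and hence is contained in $C_v$; containment plus equal cardinality gives $C_v=C_v^\perp$ in three lines, with no need to establish any group-ring identity for $v$. You instead pull everything back into $RC_n$: from the same size hypothesis you extract the matrix identity $BB^T=A^2$, translate it into $R[t]/(t^k-1)$, and use the palindromic relation to get $\beta(t)^2=t^{2j_0-1}$, whence $4j_0\equiv 0 \pmod{k}$; this pins down $j_0=0$ when $k$ is odd and shows the hypotheses are vacuous when $k$ is even, after which $v=v^T$, $v^2=0$ and Theorem~\ref{theorem:constructionsd} finish the job. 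Your route is longer, and one asserted step deserves a line of justification: for $k$ even, $\beta(t)^2=t^{2j_0-1}$ is unsolvable because in characteristic $2$ one has $\beta(t)^2=\sum_m \beta_m^2 t^{2m}$, supported on even exponents modulo the even number $k$, while $t^{2j_0-1}$ has coefficient $1\neq 0$ on an odd exponent. What your approach buys is real structural information the paper's proof does not give: the hypotheses can only be met when $k$ is odd and the unique nonzero even coefficient is $\alpha_0$, and in that case $v$ automatically satisfies the conditions of the general construction theorem. It also quietly repairs a sloppiness in the paper: the paper normalizes the diagonal blocks to $I_k$ by a permutation and then invokes $D=B^T$, but that relation does not survive the normalization when the nonzero even coefficient is $\alpha_{2j_0}$ with $j_0\neq 0$; your rigidity argument shows those cases simply cannot occur, so the theorem stands in full generality.
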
 
\begin{proof}
  By the construction we have that $\sigma(v)$ is of the form
$$
\left( \begin{array}{cc}
I_k & B \\
D& I_k \\
\end{array}
\right).
$$
If $v_{2k-i} = v_i$ for odd $i$ then $D=B^T$.  
We have that $|C| = |R|^k.$  However, the form of the matrix gives that $C$ contains a free code isomorphic to $R^k$, namely the code generated by the matrix $(I_k | B).$ This means that $C = \langle (I_k |  B)\rangle.$

Consider the code generated by the matrix $( B^T |  I_k)$.  This code must be $C^\perp.$
However, this code is contained in $C(v)$ as well, so we have that $C=C^\perp.$ 
\end{proof}

Notice that we did not have to determine the cardinality of the code to see that the code was self-dual.  

Note that it is certainly more difficult to use this technique to construct self-dual codes with the cyclic group.  That is, we had to put more restrictions on $v$ to obtain a self-dual code.  This is certainly to be expected since it is fairly difficult to find cyclic self-dual codes.

Moreover, note that a code over $R_k$ constructed here is cyclic, which gives that its image under the Gray map is quasi-cyclic of index $2^k$. 

\begin{exam}
Let $G$ be the cyclic group of order 10 and $v = 1 +uh + h^5 +uh^9 \in R_1C_{10}$.  Then
$C_v = \langle \sigma(v),u\sigma(v) \rangle$ is cyclic self-dual code and its image under $\phi_1$ is a  binary quasi-cyclic self-dual $[20,10,4]$  code of index 2.
\end{exam}

We note that this is a standard construction of self-dual codes, namely you take a vector $\vv$ and generate a circulant matrix $B$ from it with $BB^T = - I_k$, with $n=2k$,  and generate the code $(I_k | B).$  Hence we have another of the standard constructions of self-dual codes within our general framework.

 We can now use our general construction to produce isodual codes.  
 
 \begin{thm} \label{theorem:cyclicfsd}
 Let $R$ be a finite commutative Frobenius ring with characteristic 2. Let $v\in RC_n$ with $v = \sum \alpha_i h_i$ where only one $\alpha_{2i} =1$ and the rest of $\alpha_{2i}$ are 0.  If $|C(v)| = |R|^{\frac{n}{2}}$ then $C(v)$ is a formally self-dual code with respect to any weight enumerator.  
 \end{thm}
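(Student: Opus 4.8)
The plan is to follow the structure of the matrix construction established just before the statement and to mimic the proof of Theorem~\ref{theorem:cyclicconstruction}, but replacing the self-duality argument with an equivalence argument. By the construction preceding the theorem, the hypothesis that exactly one $\alpha_{2i}$ equals $1$ forces $\sigma(v)$ to be permutation equivalent to a matrix of the form
\[
\left( \begin{array}{cc}
I_k & B \\
D & I_k
\end{array}\right),
\]
where $B$ and $D$ are circulant matrices and $n=2k$. Since this matrix contains the free submatrix $(I_k \mid B)$, and we are assuming $|C(v)| = |R|^{\frac{n}{2}} = |R|^k$, the code $C(v)$ must equal $\langle (I_k \mid B) \rangle$, exactly as in the self-dual case.

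The key step is then to identify the orthogonal code and show it is equivalent to $C(v)$ rather than equal to it. First I would note that because $C(v) = \langle (I_k \mid B) \rangle$ is generated by a matrix in standard form with a free part of full rank $k$, its orthogonal $C(v)^\perp$ is generated by $(B^T \mid I_k)$, using the MacWilliams relation $|C(v)||C(v)^\perp| = |R|^n$ to confirm the dimension count. The goal is to exhibit a coordinate permutation carrying $C(v)^\perp$ onto $C(v)$. Reading the first row of $(I_k \mid B)$ as a group ring element and reading the first row of $(B^T \mid I_k)$ analogously, I would invoke Lemma~\ref{lemma:equivalent}, or argue directly that the two generator matrices differ only by a permutation of coordinates induced by the group action, so that $C(v)$ and $C(v)^\perp$ are equivalent codes. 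Once isoduality is established, formal self-duality with respect to any weight enumerator follows immediately, since equivalent codes have identical complete weight enumerators and hence identical Hamming and Lee weight enumerators as well.

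The main obstacle I anticipate is making the equivalence between $\langle (I_k \mid B)\rangle$ and $\langle (B^T \mid I_k)\rangle$ precise without assuming the extra symmetry condition $v_{2k-i}=v_i$ that forced $D = B^T$ and actual self-duality in Theorem~\ref{theorem:cyclicconstruction}. Here we do not assume that condition, so $D$ need not equal $B^T$, and I must be careful that the isodual claim rests only on the structural symmetry of the circulant blocks rather than on pointwise equality. The cleanest route is to observe that a circulant matrix $B$ is always equivalent (via simultaneous row and column permutations realizing the inverse map $h_i \mapsto h_i^{-1}$ on the cyclic group) to $B^T$, so that $(B^T \mid I_k)$ and $(I_k \mid B)$ generate equivalent codes; this equivalence is exactly the coordinate permutation supplied by the group structure and is where the characteristic-$2$ commutative Frobenius hypotheses are silently used to guarantee the orthogonal has the expected generator matrix. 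I would then conclude that $C(v)$ is isodual, and therefore formally self-dual with respect to every weight enumerator.
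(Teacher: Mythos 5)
Your proposal is correct and follows essentially the same route as the paper: reduce to the generator matrix $(I_k \mid B)$ with $B$ circulant, identify the orthogonal as $\langle (B^T \mid I_k)\rangle$, and use the fact that a circulant matrix is carried to its transpose by simultaneous row and column permutations (the inversion map on $C_k$) to conclude isoduality and hence formal self-duality with respect to any weight enumerator. Your write-up is in fact more explicit than the paper's terse proof, and your remark that no symmetry condition $v_{2k-i}=v_i$ is assumed --- so the argument must rest on the structural circulant symmetry rather than pointwise equality $D=B^T$ --- is precisely what distinguishes this theorem from Theorem~\ref{theorem:cyclicconstruction}.
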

 \begin{proof}
 If $|C(v)| = |R|^{\frac{n}{2}}$ then $C$ is generated by the matrix $(I_k |  B)$ where $B$ is a circulant matrix.   Then its orthogonal is of the form $(B^T |  I_k).$  Since $B$ is a circulant code, then by permuting the rows and columns of $B$ we can form $B^T$.  This gives that $C(v)^\perp$ is equivalent to $C(V)$ and hence isodual and therefore formally self-dual code with respect to any weight enumerator. 
 \end{proof}

\begin{exam}
Let $G$ be the cyclic group of order 6 and 
$v= 1+u_2h + (1+u_1+u_1u_2) h^3 + u_1h^5 \in R_2 C_6.$  
Then $C_v = \langle \sigma(v), u_1 \sigma(v), u_1u_2 \sigma(v) \rangle $ is a cyclic formally self-dual code and its image under $\phi_2$  is a binary quasi-cyclic self-dual $[24,12,6]$ code of index 4.
\end{exam}

\end{document}